 \theoremstyle{plain}
 \newtheorem{thm}{Theorem}[section]
 \newtheorem{prop}{Proposition}[section]
 \newtheorem{cor}{Corollary}[section]
 \theoremstyle{definition}
 \newtheorem{exm}{Example}[section]
 \newtheorem{dfn}{Definition}[section]
 \theoremstyle{remark}
 \numberwithin{equation}{section}
 \def\A{{\bf A}}
 \def\B{{\bf B}}
 \def\O{{\bf \Omega}}
 \def\vv{{\hat v}}
 \def\phi{{\hat \varphi}}
 \def\phiA{{\hat \varphi}^{\bf A}}
 \def\Sv{{v_1,\ldots,v_n}}
 \def\Svm{{v_1,\ldots,v_m}}
 \def\Svv{{{\hat v}_1,\ldots,{\hat v}_n}}
 \def\Bb{{\bf 2}}
 \def\BbI{{{\bf 2}^I}}
 \def\AI{{{\bf A}^I}}
 \def\K{\mathcal{K}}
 \def\J{\mathcal{J}}
 \def\P{\mathcal{P}}
 \def\X{\mathcal{X}}
 \def\Y{\mathcal{Y}}
 \def\PP{\mathscr{P}}
 \def\L{\mathscr{L}}
 \def\M{\mathfrak{M}}
 \def\Aut{{\rm Aut}}
\renewcommand{\leq}{\leqslant}
\renewcommand{\geq}{\geqslant}
\title[Computing finite models using free Boolean generators]{Computing finite models using free Boolean generators}
\subjclass[2010]{03B35; 03C13; 03G27}
\keywords{free boolean algebras, finite model theory, computing}
\author[Mijajlovi\'c]{\bfseries \v Zarko Mijajlovi\'c }
\author[Pejovi\'c]{\bfseries   Aleksandar Pejovi\'c}
\address{
Faculty of Mathematics \\ 
University of Belgrade   \\ 
Belgrade\\
Serbia}
\email{zarkom@matf.bg.ac.rs}
\address{
Institute of Mathematics \\ 
Serbian Academy of Science and Arts   \\ 
Belgrade\\
Serbia}
\email{pejovica@mi.sanu.ac.rs}
\thanks{Partially supported by Serbian Ministry of Science, Grant III 44006} 
\begin{document}

\vspace{18mm}
\setcounter{page}{1}
\thispagestyle{empty}

\begin{abstract}
  A parallel method for computing Boolean expressions based on the properties of
  finite free Boolean algebras is presented.
  We also show how various finite combinatorial objects can be codded in the formalism of
  Boolean algebras and counted by this procedure.
  Particularly, using a translation of first order predicate formulas
  to propositional formulas, we give a method for constructing and
  counting   finite models of the first order theories.
  An  implementation of the method that can be run on
  multi-core CPUs as well as on highly parallel GPUs is outlined.
\end{abstract}

\maketitle

\section{Introduction}

  Even ordinary personal computers are capable
  for specific  massive parallel computations.
  Examples of this kind are logical operations which can be
  computed bitwise, i.e., by use of all register bits in one processor cycle.
  Based on this idea, we propose a method for computing Boolean expressions using
  the parallel structure of standard computer processors.
  The mathematical background of our approach is based on the properties of finite free Boolean algebras.
  The idea of parallelization of computing logical operations in this way is indicated in \cite{1}.
  The basic idea is as follows.

  Let $f(x_1,x_2,\ldots,x_n)$ be a Boolean expression in $n$ variables
  $x_1,x_2,\ldots,x_n$.  We give a construction of $n$ Boolean vectors $b_1,b_2,\ldots,b_n$
  of  size $2^n$ with the following property:
  \medskip

  ($\PP$)\quad  $f(b_1,b_2,\ldots,b_n )$ is a Boolean vector that codes the full DNF of  $f$.
  \smallskip

  It appears that  vectors  $b_1,b_2,\ldots,b_n$ are exactly  free
  generators of a free Boolean algebra having $n$ free  generators.

  Using a translation procedure from the first order predicate formulas
  to propositional formulas, we give a method for constructing and
  counting various combinatorial objects. This idea is formally developed
  in \cite{2}, but it was used there in the study of problems in the infinitary
  combinatorics, particularly in finding their complexity in the
  Borel hierarchy.
  Related combinatorial problems are considered, for example
  the number of automorphisms of finite structures and various partition problems over finite sets.
  We also give an  implementation of the method that can be run on
  multi-core CPUs as well as on highly parallel GPUs (Graphics processing units). 

  Standard notation and terminology from model theory is assumed
  as in [\ref{Keisler}] and  [\ref{MijajloMT}].
  Also, for notions from universal algebras we shall refer to [\ref{Burris}].
  Models of a first order language
  $L$ are denoted by bold capital letters
  $\bf A$, $\bf B$, etc, while their domains respectively by $A$, $B$ and so on.
  By a domain we mean any nonempty set.
  The letter $L$ will be  used to denote a first-order language.
  The first order logic is denoted by $L_{\omega\omega}$ and the
  propositional calculus with a set $\mathcal{P}$ of propositional variables by $L_{\omega}^\mathcal P$,
  or simply $L_{\omega}$.
  The set of natural numbers $\{0,1,2,\ldots\}$ is denoted by $N$.
  We also take $2=\{0,1\}$. By $\Bb$ we denote the two-element Boolean algebra
  and then $\BbI$ is the power of $\Bb$, while $\bf 0$ and $\bf 1$ are
  respectively  the smallest and the greatest element of $\BbI$.
  Occasionally elements of  $\BbI$ are called Boolean vectors.
  Whenever is  needed to distinguish the formal equality sign from identity,
  for the first one we shall keep $=$, while $\equiv$ denotes identity.

  \section{Variables}

  In this section we develop and explain the logical and algebraic
  background for our computing method. The power of a model $\A$, the product
  $\prod_{i\in I}\A$, is denoted by $\A^I$.

  \subsection{Interpretation of variables}

  By a set of variables we mean any non\-empty set $V$ so that
  no $v\in V$ is a finite sequence of other elements from $V$. This assumption
  secures the unique readability of terms and formulas.
  Particularly we shall consider finite and countable sets of variables $V$,
  e.g. $V= \{v_0,v_1,\ldots \}$.
  A valuation of a domain $A$ is any map from $V$ to $A$.
  Let $I$ denote the set of all valuations from domain $A$,
  i.e., $I= A^V$. In this section, the letter $I$ will be reserved for the set
  of valuation of a domain $A$.  Sometimes we shall assume that elements from $I$ will have finite
  supports.

  \begin{dfn}\label{IV} ({\it Interpretation of variables}).
  Let $v$ be a variable from $V$. The interpretation of  variable $v$ in domain $A$ is the map
  $\hat v \colon I\to A$ defined by  $\hat v(\mu)= \mu(v)$, $\mu\in I$.
  \end{dfn}

  The set of interpretations of variables from $V$ into domain $A$ is denoted by $\hat V_A$.
  Therefore, $\hat V_A= \{\hat v\colon v\in V \}$.

  Let $\varphi(v_1,\ldots,v_n)$ be a formula of a language $L$ having free variables
  $v_1,\ldots,v_n$ and $\bf A$ a model of $L$. The map $\hat\varphi^{\bf A}(\hat v_1,\ldots,\hat v_n)$,
  abbreviated  by  $\hat\varphi^{\bf A}$, is $\hat\varphi^{\bf A}\colon I\to 2$ defined by
  $\hat\varphi^{\bf A}(\mu) = 1$ if $\bf A \models \varphi[\mu]$, otherwise
  $\hat\varphi^{\bf A}(\mu) = 0$, $\mu\in I$.  Hence $\phiA\in 2^I$.

  \begin{prop}\label{prop1}
    Let $\varphi$ be an identity $s=t$, where $s$ and $t$ are terms of $L$.
    Then the following are equivalent:
    \begin{equation}
       1^\circ\, \A^I \models \varphi[\Svv],\quad
       2^\circ\, \phiA(\Svv)= {\bf 1}, \quad
       3^\circ\, \A \models \varphi[\mu],\, \mu\in I.
    \end{equation}
  \end{prop}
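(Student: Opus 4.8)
The plan is to reduce everything to the elementary fact that in a direct power the fundamental operations, and hence all term functions, are evaluated coordinatewise. First I would establish a term evaluation lemma: for every term $t$ with variables among $\Sv$ and every $\mu\in I$, the value of $t$ in $\A^I$ at the interpretation functions $\Svv$ satisfies $t^{\A^I}(\Svv)(\mu)=t^{\A}[\mu]$. Equivalently, the element $t^{\A^I}(\Svv)$ of $\A^I$ is exactly the function $\mu\mapsto t^{\A}[\mu]$. This is proved by induction on the complexity of $t$: the base case $t\equiv v_i$ is Definition \ref{IV} itself, since $\hat v_i(\mu)=\mu(v_i)$; the constant case is immediate, as constants are interpreted as constant functions in the power; and the induction step uses only that each operation symbol $F$ is interpreted in $\A^I$ pointwise, so that $F^{\A^I}$ applied to the coordinate functions agrees at $\mu$ with $F^{\A}$ applied to their values at $\mu$, whereupon the induction hypothesis closes the step.

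With this lemma the equivalences are short. For $1^\circ\Leftrightarrow 3^\circ$, recall that $\varphi$ is the identity $s=t$, so $\A^I\models\varphi[\Svv]$ holds precisely when $s^{\A^I}(\Svv)$ and $t^{\A^I}(\Svv)$ are the same element of $\A^I$. Since equality in a power is equality of functions, and two functions on $I$ coincide iff they agree at every $\mu\in I$, the lemma rewrites this as $s^{\A}[\mu]=t^{\A}[\mu]$ for all $\mu\in I$, that is, $\A\models\varphi[\mu]$ for every $\mu$, which is $3^\circ$. For $2^\circ\Leftrightarrow 3^\circ$ I would merely unwind the definition of $\phiA$: since $\phiA(\mu)=1$ iff $\A\models\varphi[\mu]$, the vector $\phiA$ is the characteristic function in $\BbI$ of the set of valuations satisfying $\varphi$, and it equals the top element ${\bf 1}$ of $\BbI$ exactly when $\phiA(\mu)=1$ for every $\mu\in I$, again $3^\circ$. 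Routing both biconditionals through $3^\circ$ gives the full equivalence.

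The only genuine content, and so the step I expect to demand the most care, is the term evaluation lemma, specifically its inductive clause for operation symbols; everything else is bookkeeping about pointwise equality of functions in a direct power. It is worth stressing that the hypothesis that $\varphi$ be an atomic identity $s=t$ is what makes $1^\circ\Leftrightarrow 3^\circ$ clean: the argument rests on coordinatewise evaluation of terms together with the fact that an equation holds in a product exactly when it holds in each factor, and for an arbitrary first order $\varphi$ this precise three-way equivalence would no longer hold.
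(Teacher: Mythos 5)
Your proof is correct and rests on the same core fact as the paper's: coordinatewise evaluation of terms in the power $\A^I$, i.e.\ $t^{\A^I}(\Svv)(\mu)=t^{\A}[\mu]$, together with the unwinding of the definition of $\phiA$ for $2^\circ\Leftrightarrow 3^\circ$. The only (minor) organizational difference is that you derive both directions of $1^\circ\Leftrightarrow 3^\circ$ from a single term-evaluation lemma proved by induction, whereas the paper gets $3^\circ\Rightarrow 1^\circ$ by citing preservation of identities under products and $1^\circ\Rightarrow 3^\circ$ by applying the projection homomorphism $\pi_\mu$ — two standard facts whose common content is exactly your lemma.
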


  \begin{proof}
  The equivalence of $2^\circ$ and $3^\circ$ follows immediately by definition \ref{IV}.
  From $3^\circ$ follows $1^\circ$ since identities are preserved under products of
  models. Finally, assume $1^\circ$. Then
  \begin{equation}\label{eq1}
    s^\AI(\Svv) = t^\AI(\Svv).
  \end{equation}
  Let $\pi_\mu\colon \AI \to \A$ be a projection, $\mu \in I$. Since $\pi_\mu$ is
  a homomorphism we have
  \begin{equation}
    \begin{array}{rll}
       \pi_\mu(s^\AI(\Svv)) &= &s^\A(\pi_{\mu}\vv_1,\ldots,\pi_{\mu}\vv_n)   \\
                            &= &s^\A(\vv_1(\mu),\ldots,\vv_n(\mu))           \\
                            &= &s^\A(\mu(v_1),\ldots,\mu(v_n)) = s^\A[\mu].
    \end{array}
  \end{equation}
  Hence, $3^\circ$ follows by \ref{eq1}.
   \end{proof}

  For an algebra $\A$ of $L$ let ${\J}(\A)$ be the set of all identities that are
  true in $\A$. Similarly, ${\J}(\K)$ denotes the set of all
  identities that are true in all algebras of a class $\K$ of algebras of $L$.
  If ${\J}(\A) = {\J}(\B)$, $\A$ and $\B$ are algebras of $L$, we hall also write
  $\A \equiv_\J \B$.

  The notion of interpretation of variables will play the fundamental role in our
  analysis and program implementation. But they can be useful in other cases, too.
  For example, for so introduced notions it is easy to prove the Birkhoff HSP theorem
  and other related theorems. Here we prove a theorem on the
  existence of free algebras.
  The novelty of these proof is  that it does not use the notion of a term algebra
  (absolutely free algebra).
  For the simplicity of exposition, we shall assume that $L$ is countable.

  \begin{thm}\label{Birkhoff} (G. Birkhoff)
    Let $\K$ be a nontrivial abstract\footnote{closed for isomorphic images} class
    of algebras of  L, closed under
    subalgebras and products. Then $\K$ has a free  algebra over every nonempty set.
  \end{thm}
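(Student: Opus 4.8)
The plan is to realize the free algebra as a subalgebra of a product of powers, using the interpretations $\hat v$ as the free generators and exploiting the fact, extracted from the proof of Proposition \ref{prop1}, that each projection $\pi_\mu\colon\A^I\to\A$ sends $\hat v$ to $\mu(v)$. Identify the given nonempty set with a set $V$ of variables. First I would bound cardinalities: in any algebra of $L$ the subalgebra generated by a set $X$ consists of the values of $L$-terms in elements of $X$, so its cardinality is at most the number of such terms, which for countable $L$ is $\lambda=\max(|V|,\aleph_0)$; this is a pure counting estimate and does not require building a term algebra. Consequently there is, up to isomorphism, only a set of algebras in $\K$ that are generated by the image of a map from $V$. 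Fix a set $\{\A_j\colon j\in J\}$ of representatives of these isomorphism types.

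For each $j$ put $I_j=A_j^{\,V}$ and form the power $\A_j^{I_j}$, and let $\hat v$ denote, as in Definition \ref{IV}, the interpretation of $v$ in $\A_j^{I_j}$. Set $\mathbf P=\prod_{j\in J}\A_j^{I_j}$ and, for each $v\in V$, let $\bar v=(\hat v)_{j\in J}\in P$. Define $\F=\langle\,\bar v\colon v\in V\,\rangle$, the subalgebra of $\mathbf P$ generated by these elements. Since $\K$ is closed under products and subalgebras, $\mathbf P\in\K$ and hence $\F\in\K$. I claim that $v\mapsto\bar v$ exhibits $\F$ as a free algebra in $\K$ over $V$.

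The map $v\mapsto\bar v$ is injective: nontriviality of $\K$ gives some $\A\in\K$ with at least two elements, a valuation $\mu$ separating any two chosen variables, and the subalgebra generated by $\mu[V]$ is a representative $\A_j$, so $\bar v$ and $\bar w$ differ in the $j$-th coordinate at $\mu$. For the universal property, uniqueness of an extension is immediate because the $\bar v$ generate $\F$. For existence, given $\B\in\K$ and $h\colon V\to B$, let $\mathbf B_0=\langle h[V]\rangle\le\B$; being $|V|$-generated and in $\K$, it is isomorphic to some $\A_j$ via $\theta\colon\A_j\to\mathbf B_0$. Put $\mu=\theta^{-1}\circ h\in I_j$ and take the composite homomorphism
\begin{equation*}
  \F\hookrightarrow\mathbf P\xrightarrow{\ \pi_j\ }\A_j^{I_j}
   \xrightarrow{\ \pi_\mu\ }\A_j\xrightarrow{\ \theta\ }\mathbf B_0\hookrightarrow\B .
\end{equation*}
Writing $\bar h$ for its restriction to $\F$, the projection identity $\pi_\mu(\hat v)=\hat v(\mu)=\mu(v)$ from the proof of Proposition \ref{prop1} gives $\bar h(\bar v)=\theta(\mu(v))=h(v)$, as required.

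The conceptual work is packaged in Definition \ref{IV} and Proposition \ref{prop1}, so the only genuine obstacle is the set-theoretic point in the first step: one must confirm that restricting to a representative set of $|V|$-generated members of $\K$ loses nothing, i.e.\ that every assignment $h$ of generators factors through one of the chosen $\A_j$. This is exactly what the subalgebra $\mathbf B_0=\langle h[V]\rangle$ supplies, and it is also where closure under subalgebras (to keep $\mathbf B_0$ and $\F$ inside $\K$) and nontriviality (for injectivity of the generators) are both essential.
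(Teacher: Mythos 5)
Your proof is correct, but it takes a genuinely different route from the paper's. The paper first manufactures a single \emph{characteristic} algebra $\A\in\K$ with $\J(\A)=\J(\K)$ (at most countable $\equiv_\J$-subalgebras via downward L\"owenheim--Skolem, then their product), realizes the free algebra as the subalgebra $\O$ of $\A^I$ generated by the interpretations $\hat v$, and obtains the extension of a map $g$ by \emph{defining} it on term representatives, $s^{\O}(\hat v_1,\ldots,\hat v_n)\mapsto s^{\B}(g\hat v_1,\ldots,g\hat v_n)$; the entire burden is then the well-definedness of this assignment, which is exactly where Proposition \ref{prop1} and $\J(\A)=\J(\K)$ enter. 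You instead take a product over a representative set of all $V$-generated members of $\K$ (justified by a cardinality bound on term-generated subalgebras), each raised to the power of its own valuation set, and generate $\F$ by the tuples $\bar v$; the required homomorphism into a given $\B$ is then simply a composite of canonical projections and an isomorphism, so no well-definedness check and no appeal to identities are needed --- the only real work is the set-theoretic reduction to a representative set, which plays the role of the paper's L\"owenheim--Skolem step, plus the closure properties to keep $\mathbf B_0$ and $\F$ inside $\K$. This is the classical Birkhoff construction of the free algebra as a subalgebra of a product of all small generated algebras. What the paper's detour buys is the notion of characteristic algebra itself (Note \ref{Birkhoff}.2): the free algebra lives inside a single power $\A^I$, and this is immediately reused to conclude that $\O_n=\Bb^{2^n}$ is free on $n$ generators because $\Bb$ is characteristic for Boolean algebras --- a corollary your construction does not yield directly. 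Both arguments share the same conceptual core, namely Definition \ref{IV} and the projection identity $\pi_\mu(\hat v)=\mu(v)$ from the proof of Proposition \ref{prop1}.
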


  \begin{proof}
    It is easy to see, for example by use of the downward Skolem-L\"owen\-heim theorem, that
    for each algebra $\A \in \K$ there is at most countable subalgebra $\A'$ of $\A$
    so that $\A' \equiv_\J \A$. The algebra $\A'$ is obviously isomorphic to
    an algebra of which the domain is a subset of $N$. Hence, there is a set
    $\K'= \{\A_s\colon s\in S\}$
    of  at most countable algebras such that $\K'\subseteq \K$ and $\J(\K) = \J(\K')$.

    Let $\A = \prod_s \A_s$ be the product of all algebras from $\K'$.
    Since $\K$ is closed under products, it follows $\A\in \K$, hence
    $\J(\K)\subseteq \K(\A)$. On the other hand, for each $s\in S$, $\A_s$ is
    a homomorphic image of $\A$, as $\A_s = \pi_s\A$. Hence each identity $\varphi$
    of $L$ which holds on $\A$ is also true in all algebras from $\K'$ and
    therefore in all algebras from $\K$. So we proved
    \begin{equation}\label{JKA}
      \J(\K) = \J(\A).
    \end{equation}

    Since $\K$ is nontrivial, it must be $|A|\geq 2$. Let $X$ be any non empty set.
    For our purpose we may identify $X$ with $\hat V_A$ for some set of variables $V$.
    Let $\bf \Omega$ be subalgebra of $\A^I$ generated by $\hat V_A$.
    Since $\K$ is closed under subalgebras, it follows $\O \in \K$.
    Now we prove that
    $\bf \Omega$ is a free algebra over $\hat V_A$ for class $\K$.
    Let $\B\in \K$ be an arbitrary algebra and $g \colon \hat V_A \to B$.
    Each element $a\in \Omega$ is of the form
    $a= s^{\bf \Omega}(\vv_1,\ldots,\vv_n)$ for some $L$-term $s$ and some (different)
    variables $v_1,\ldots,v_n\in V$.
    We extend $g$ to $f\colon \O \to \B$ taking
    \begin{equation}\label{fg}
       f(a)= s^{\B}(g\vv_1,\ldots,g\vv_n).
    \end{equation}
    The map $f$ is well defined.  Indeed, suppose that for some other term $t$ of $L$,
    $a= t^{\bf \Omega}(\vv_1,\ldots,\vv_n)$. Let $\varphi$ denote the
    identity $s(v_1,\ldots,v_n)= t(v_1,\ldots,v_n)$.
    Then
    $s^{\O}(\vv_1,\ldots,\vv_n) = t^{\O}(\vv_1,\ldots,\vv_n)$ and as
    $\O\subseteq \A^I$ it follows $\AI\models \varphi[\vv_1,\ldots,\vv_n]$.
    By Proposition \ref{prop1} it follows that the identity $\varphi$ holds on $\A$.
    By \ref{JKA} then $\varphi$ is true in all algebras from $\K$. Hence
    \begin{equation}
       s^{\B}(g\vv_1,\ldots,g\vv_n) = t^{\B}(g\vv_1,\ldots,g\vv_n),
    \end{equation}
    and thus we proved that the $f$ is well-defined.

    In a similar manner we prove that $f$ is a homomorphism.
    For simplicity, suppose $\ast$ is a binary operation of $L$.
    We denote the  interpretations of $\ast$ in $\O$ and $\B$ by $\cdot$.
    Take $a,b\in\Omega$ and let $s$ and $t$ be terms of $L$
    so that
    \begin{equation}
      a= s^{\O}(g\vv_1,\ldots,g\vv_n),\quad b= t^{\O}(g\vv_1,\ldots,g\vv_n)
    \end{equation}
    and let $w$ be the combined term $w= s \ast t$. Then
    \begin{equation}
      f(a\cdot b) = f(w^\O(\vv_1,\ldots,\vv_n))= w^\B(g\vv_1,\ldots,g\vv_n)= g(a)\cdot g(b).
    \end{equation}
    Thus, $f$ is a homomorphism from $\O$ to $\B$ which extends $g$.
    \end{proof}

    Suppose $\K$ is the class of algebras to which refer the previous theorem.
    We note the following.

    {\it Note} \ref{Birkhoff}.1\, It is easy now to prove the Birkhoff HSP theorem. Assume   $\K$
    is also closed under  homomorphic images and let  $T=\J(\K)$.
    Let $\O$ be a free algebra of $\K$ with infinitely many free generators.
    Then $\J(\O) = \J(\K) =T$. Suppose $\B$ is a model of $T$
    and let $\Omega$ be a free algebra for class $\K$
    and $X$ is a set of free generators of $\O$ such that $|X|\geq |B|$.
    Let $g\colon X\to B$ so that $g(X)=B$.
    Then by the same construction as in the previous proof g extends to
    some homomorphism $f\colon \O \to \B$, thus $\B$ is a homomorphic image of $\O$.
    Hence $\B$ belongs to $\K$.

    {\it Note} \ref{Birkhoff}.2\,  Assume $\A\in \K$ is an arbitrary algebra which satisfies
    condition \ref{JKA}.   Such an algebra $\A$ will be called the
    characteristic algebra for the class $\K$. By close inspection of
    the proof of Theorem \ref{Birkhoff}, we see that this condition suffices
    to construct a free algebra for $\K$ from $\A$ as we did in the
    proof of \ref{Birkhoff}.
    This idea is indicated to some extent in [\ref{Burris}], (Part II, chapter 11,
    particularly see problem 11.5, p. 77) but under  stronger   and
    amended assumptions and without referring to variable interpretations.

  \subsection{Free Boolean vectors}

  It is well known that finite free Boolean algebras with $n$ free generators
  are the algebras $\Bb^{2^n}$.  We remark that this immediately follows
  by note \ref{Birkhoff}.2, since $\Bb$ is
  the characteristic algebra for the class of all Boolean algebras.
  The structure and properties of free
  Boolean vectors of $\O_n= \Bb^{2^n}$ are discussed in [\ref{MijajloFB}] in details.

  We remind that a collection $\{b_1,\ldots,b_n\}$ of elements of a Boolean algebra $\B$ is independent
  if $b_1^{\alpha_1}\wedge \ldots \wedge b_n^{\alpha_n}\not=0$, where
  $b^1= b$ and $b^0= b'$. A similar definition of independence is for families of subsets of a given set.
  A collection $\{b_1,\ldots,b_n\}$ generates the free subalgebra of   $\B$
  if and only if it is independent, cf. [\ref{Sikorski}].
  The number of free generating sets of $\O_n$ is found in [\ref{MijajloFB}].
  In fact,  the following holds.
  \begin{thm}
    Let $S= \{1,2,\ldots,2^n\}$ and $a_n$, $b_n$, $c_n$ be the sequences defined as follows.
    \begin{enumerate}
      \item $a_n$ = number of labeled Boolean algebras with domain $S$
                    {\rm (}number of different Boolean algebras with domain $S$ {\rm )}.
      \item $b_n$ = number of independent collections $\{P_1,\ldots,P_n\}$
                    of subsets of $S$.
      \item $c_n$ = number of free generating sets $\{b_1,\ldots,b_n\}$ of $\O_n$.
    \end{enumerate}
    Then $a_n= b_n = c_n = (2^n)!/n!$.
  \end{thm}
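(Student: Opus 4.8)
The plan is to funnel all three numbers through a single count --- that of \emph{ordered} independent families of $n$ subsets of a $2^n$-element set --- and to obtain each of $a_n,b_n,c_n$ as the number of orbits of a free $\Sym(\{1,\dots,n\})$-action on a set of size $(2^n)!$. I would first observe that $b_n=c_n$ almost by definition. As noted before the theorem, $\O_n=\Bb^{2^n}$ is isomorphic to the power set algebra $\mathcal P(S)$ of the $2^n$-element set $S$, its atoms being the singletons of $S$; thus elements of $\O_n$ are just subsets of $S$. By the independence criterion cited above, a set $\{b_1,\dots,b_n\}\subseteq\O_n$ freely generates a subalgebra iff it is independent; but the free Boolean algebra on $n$ generators has $2^{2^n}=|\O_n|$ elements, so any independent $n$-set already generates all of $\O_n$. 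Hence the free generating sets of $\O_n$ are precisely the independent $n$-families of subsets of $S$, giving $c_n=b_n$.

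For the central step I would count ordered independent families $(P_1,\dots,P_n)$ of subsets of $S$, where $|S|=2^n$. The $2^n$ constituents $P_1^{\alpha_1}\cap\dots\cap P_n^{\alpha_n}$, indexed by $\alpha\in\{0,1\}^n$, are pairwise disjoint with union $S$; independence makes each of them nonempty, and since $2^n$ nonempty disjoint sets partitioning a set of size $2^n$ must all be singletons, each constituent is a single point. Sending the family to the map $\beta\colon\{0,1\}^n\to S$ that picks out these singletons gives a bijection, whose inverse sends $\beta$ to the family $P_i=\{\beta(\alpha)\colon\alpha_i=1\}$. Hence there are exactly $(2^n)!$ ordered independent families. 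For $n\ge 2$ independence forces the $P_i$ to be pairwise distinct (if $P_i=P_j$ then the constituent with $\alpha_i=1,\alpha_j=0$ is empty), so each unordered family $\{P_1,\dots,P_n\}$ comes from exactly $n!$ orderings; dividing yields $b_n=(2^n)!/n!$.

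Next I would compute $a_n$ by transport of structure. A Boolean algebra on $S$ has $2^n$ elements, hence is isomorphic to $\Bb^n=\mathcal P(\{1,\dots,n\})$, whose automorphism group is $\Sym(\{1,\dots,n\})$ of order $n!$. Each set bijection $\beta$ from the underlying set of $\Bb^n$ onto $S$ transports the structure of $\Bb^n$ to one on $S$; every structure on $S$ arises in this way, and two bijections give the same structure iff they differ by an automorphism of $\Bb^n$. Since that precomposition action of $\Sym(\{1,\dots,n\})$ on the $(2^n)!$ bijections is free, the number of distinct structures is $(2^n)!/n!$, i.e. $a_n=(2^n)!/n!$. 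Combining the three computations gives $a_n=b_n=c_n=(2^n)!/n!$.

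The main obstacle is the central paragraph: I must use the independence hypothesis and the numerical coincidence $|S|=2^n$ simultaneously --- independence to get nonemptiness of every constituent, the cardinality to force each to be a singleton --- and then confirm that the passage to unordered families is exactly $n!$-to-one, which rests on the distinctness of the $P_i$. Once this is in place the identities for $a_n$ and $c_n$ reduce to routine orbit counts for the same free $\Sym(\{1,\dots,n\})$-action, and the common value $(2^n)!/n!$ drops out.
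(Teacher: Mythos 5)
Your proof is correct, but it routes the count differently from the paper. The paper first computes $a_n=(2^n)!/n!$ by the labeling formula $m!/|\Aut(\A)|$ applied to $\Bb^n$, and then obtains $b_n$ by transporting this count across an explicit bijection: a labeled Boolean algebra $\B^l$ is sent to its set of ultrafilters $U(\B^l)$, and most of the paper's proof is spent verifying that this map is well defined, injective (the constituents $P^\alpha$ recover the operations of $\B^l$), and surjective (every independent family arises this way). You instead compute $b_n$ directly: ordered independent families correspond bijectively to bijections $\{0,1\}^n\to S$ via the singleton constituents, giving $(2^n)!$, and independence forces the $P_i$ to be pairwise distinct, so dividing by $n!$ is legitimate; then $a_n$ is a separate, routine orbit count for the free precomposition action of $\Aut(\Bb^n)\cong S_n$. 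The two arguments share their essential ingredients --- the fact that $2^n$ nonempty disjoint constituents in a $2^n$-element set must all be singletons, and $|\Aut(\Bb^n)|=n!$ --- but yours dispenses with the ultrafilter correspondence entirely, at the cost of not exhibiting the structural bijection between labeled algebras and independent families that the paper's version provides. Your observation that an independent $n$-set must generate all of $\O_n$ (by comparing $2^{2^n}$ with $|\O_n|$) also makes the step $b_n=c_n$ slightly more explicit than the paper's one-line appeal to the independence criterion.
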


  \begin{proof}
    The number of  labelings of
    a finite model $\A$ of size $m$ is equal to $m!/|\Aut(\A)|$.
    As $\Aut(\Bb^n)$ is isomorphic to the permutation group $S_n$,
    it follows $a_n= (2^n)!/n!$.

    Let $\B= \Bb^n$ and $\B^l$ a labeled algebra obtained from $\B$.
    Algebra $\B$ has exactly $n$ ultrafilters and so has $\B^l$.
    Let $U(\B)$ be the set of all ultrafilters of $\B$.
    By Theorem 2.2.7 in [\ref{MijajloFB}], $U(\B)$ is an independent collection of
    subsets of $S$. The map $U$ which assigns  $U(\B^l)$ to $\B^l$ is $1-1$.
    Indeed, let us for
    $S_1,\ldots,S_n\subseteq S$ and $\alpha\in 2^n$ define
    \begin{equation}
       S^\alpha = S^{\alpha_1}\cap\ldots\cap S^{\alpha_n}.
    \end{equation}

    For $a\in S$ let $P_1,\ldots,P_k, P_{k+1},\ldots,P_n\in U(\B^l)$ be such that
    $a\in P_1,\ldots,P_k$ and $a\not\in P_{k+1},\ldots,P_n$. Then
      $P_1\cap\ldots,\cap P_k\cap P_{k+1}^c\ldots\cap P_n^c =\{a\}$.

    Therefore, we proved that for each $a\in B^l$ there is a unique $\alpha\in 2^n$
    such that $P^\alpha = \{a\}$, $P_1,\ldots,P_n\in U(\B^l)$.
    Let $\wedge^l$  and $'^l$ be Boolean operations of $\B^l$.
    Then for $a, b\in B^l$ and corresponding $\alpha,\beta\in 2^n$
    we have
    \begin{equation}\label{lB}
      P^{\alpha'} = \{a^{'^l}\},\quad  P^{\alpha \wedge \beta}= \{a\wedge^l b\}
    \end{equation}
    where $\alpha'$, $\alpha \wedge \beta$ are computed in $\Bb^n$.
    Thus, we proved that $U(\B^l)$ uniquely determines $\B^l$, hence $a_n \leq b_n$.

    Suppose $P= \{P_1,\ldots,P_n\}$ is an independent
    collection of subsets of $S$. Then $P$ can serve as  $U(\B^l)$
    for certain labeled Boolean algebra $\B^l$. To prove it,
    note that each $P^\alpha$ has at least one element and that
    $\bigcup_{\alpha\in 2^n} P^\alpha$ has at most $2^n$ elements.
    This shows that $P^\alpha$ is one-element set. Therefore,
    a Boolean algebra $\B^l$ with domain $S$ is defined by \ref{lB} and
    it is easy to see that $P= U(\B^l)$. Hence $a_n=b_n$.

    Finally, as noted, a collection $X= \{X_1,\ldots,X_n\}$ of subsets of $S$
    freely generates the power set algebra $P(S)$ if and only if
    $X$ is independent. Hence, $c_n=b_n$.
    \end{proof}

    We will be dealing particularly with free generators of $\O_n$ of the following form.
    Let $a_i$, $i = 0, 1,\ldots, 2^n - 1$, be  binary expansions of  integers $i$
    with zeros  padded to the left up to the length $n$. Let $M$ be the matrix whose columns
    are vectors $a_i$. As noted in [\ref{MijajloFB}], binary vectors
    $b_i$, $i = 1, 2\ldots n$, formed by rows of $M$ are free vectors of
    $\O_n$. In the case n = 3, the matrix M and vectors $b_i$ are
    \begin{equation}
      M= \left[
         \begin{array}{cccccccc}
            0 &0 &0 &0 &1 &1 &1 &1  \\
            0 &0 &1 &1 &0 &0 &1 &1  \\
            0 &1 &0 &1 &0 &1 &0 &1
         \end{array} \right],
    \end{equation}
    $b_1 = 00001111$, $b_2 = 00110011$, $b_3 = 01010101$.

    \subsection{Computing Boolean expressions}\label{CBe}

    Let $t= t(\Sv)$ be a Boolean expression in variables $\Sv$ and $b_1,\ldots,b_n$ free generators of
    $\O_n$.
    \begin{prop}\label{prop22}
      $t^{\O_n}(b_1,\ldots,b_n)$ codes the the full DNF of $t$.
    \end{prop}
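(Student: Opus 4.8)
The plan is to use that $\O_n=\Bb^{2^n}$ is a direct power, so that term operations are computed coordinatewise, and the assertion reduces to identifying, slot by slot, the truth table of $t$. First I would introduce the coordinate projections $\pi_i\colon\O_n\to\Bb$ for $i=0,1,\ldots,2^n-1$. Each $\pi_i$ is a homomorphism, so for the Boolean term $t(\Sv)$ the same projection computation as in the proof of Proposition \ref{prop1} gives
\begin{equation}
  \pi_i\bigl(t^{\O_n}(b_1,\ldots,b_n)\bigr)=t^{\Bb}(\pi_i b_1,\ldots,\pi_i b_n).
\end{equation}
Thus the $i$-th coordinate of the output is the value of $t$ on the bit-tuple $(\pi_i b_1,\ldots,\pi_i b_n)$.

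The second step is to identify that tuple. By the construction of $M$, its $i$-th column is the binary expansion of $i$ padded to length $n$, and $b_j$ is the $j$-th row of $M$; hence $\pi_i b_j$ is the $j$-th bit of that expansion. Writing $\alpha^{(i)}\in 2^n$ for the binary expansion of $i$, we get $(\pi_i b_1,\ldots,\pi_i b_n)=\alpha^{(i)}$, and as $i$ ranges over $0,\ldots,2^n-1$ these tuples enumerate every element of $2^n$ exactly once. Combined with the first step, coordinate $i$ of $t^{\O_n}(b_1,\ldots,b_n)$ equals $t^{\Bb}(\alpha^{(i)})$, so the output vector is precisely the truth-table column of $t$, read off in order of increasing $i$.

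It remains to match this vector with the full DNF. Recall the full DNF of $t$ is $\bigvee\{v_1^{\alpha_1}\wedge\cdots\wedge v_n^{\alpha_n}\colon t^{\Bb}(\alpha)=1\}$, the disjunction of the minterms on which $t$ holds, with $v^1=v$, $v^0=v'$. Because $\{b_1,\ldots,b_n\}$ is independent, the $2^n$ products $b^\alpha=b_1^{\alpha_1}\wedge\cdots\wedge b_n^{\alpha_n}$ are exactly the atoms of $\O_n$; by the computation above $b^\alpha$ is the unit vector carrying a single $1$ in the coordinate $i$ with $\alpha^{(i)}=\alpha$. Substituting $b_j$ for $v_j$ in the full DNF therefore yields $\bigvee\{b^\alpha\colon t^{\Bb}(\alpha)=1\}$, the vector whose $1$'s sit exactly at the coordinates where $t$ is satisfied, which is the same vector produced coordinatewise. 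Hence the positions of the $1$'s in $t^{\O_n}(b_1,\ldots,b_n)$ are in bijection with the minterms of the full DNF, which is the asserted coding.

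I expect no genuine obstacle; the argument is essentially bookkeeping, and the only point needing care is the indexing convention, namely that coordinate $i$ of the power is consistently paired with the binary expansion $\alpha^{(i)}$ and that the columns of $M$ sweep out all of $2^n$. Equivalently, the whole computation can be phrased through the interpretation-of-variables framework of Definition \ref{IV}, identifying $b_j$ with $\vv_j$ and the coordinates with the $2^n$ valuations $\mu\in I$, under which the claim drops out of the coordinatewise evaluation already recorded in Proposition \ref{prop1}.
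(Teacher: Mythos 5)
Your proposal is correct and follows essentially the same route as the paper: project coordinatewise, use that each projection is a homomorphism to reduce the evaluation of $t^{\O_n}(b_1,\ldots,b_n)$ to the truth table $t^{\Bb}$ on the tuples read off from the generators, and identify the positions of the $1$'s with the minterms of the full DNF. The paper phrases this directly through the interpretation-of-variables framework (taking $b_i=\vv_i$ and indexing coordinates by valuations $\mu$), which is exactly the reformulation you note at the end, so the extra material on atoms and the explicit matrix $M$ is just additional bookkeeping on top of the same argument.
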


    \begin{proof}
      By our previous discussion, we may take $b_i=\vv_i$ and $I= \{\Svv\}$.
      Let $\pi_\mu$ be a projection from $\O_n$ to $\Bb$, $\mu\in I$, and $d= t^{\O_n}(b_1,\ldots,b_n)$.
      Then
      \begin{equation*}
        \pi_\mu d= \pi_\mu t^{\O_n}(\Svv) = t^\Bb(\mu(v_1),\ldots,\mu(v_n)),
      \end{equation*}
      hence
      $\displaystyle t= \sum_{\pi_\mu d= 1} v_1^{\mu_1}\cdots v_n^{\mu_n}$, so $d$ codes the full DNF of $t$.
    \end{proof}

    The parallel algorithm for computing $d= t^{\O_n}(b_1,\ldots,b_n)$
    is described in details in [\ref{MijajloFB}], Section 2.
    We repeat in short this procedure. Suppose we have a $2^k$-bit processor at our
    disposal, $k<n$. Each vector $b_i$ is divided into $2^{n-k}$  consecutive
    sequences of equal size. Hence,   $b_i$ consists of $2^{n-k}$ blocks $b_{ij}$, each of   size $2^k$.
    To find $d$, blocks   $d_j= t(b_{1j},\ldots,b_{nj})$ of size $2^k$ are computed bitwise
    for $j= 1,2,\ldots,2^{n-k}$. Then the combined vector $d_1d_2\ldots d_{2^{n-k}}$ is
    the output vector $d$. The total time for computing $d$ approximately is
    $T= 2^{l+n-k}\delta$, where $2^l$ is the total number of nodes in the
    binary expression tree of the term $t$ and $\delta$ is the time interval
    for computing bitwise one logical operation\footnote{For modern computers,
    $\delta\approx 10^{-9}$ seconds}.

    Suppose now that we have $2^r$  $2^k$-bit processors. Computations of $d_j$ is distributed
    among all processors and they compute $t(b_{1j},\ldots,b_{nj})$ in parallel.
    Actually, they are acting as a single $2^{k+r}$-bit processor. Hence, the total time
    for computing $d$ in this case is $T= 2^{l+n-k-r}\delta$.

    We implemented this  algorithm on a PC with two GPU's, each having
    $2^{11}$ 32-bit processors. Therefore, this installation is equivalent to
    a machine with  one $2^{17}$ - bit processor, as $k=5$ and $r=12$.
    Our implementation at this moment is based on 30 free Boolean vectors,
    each with $2^{30}$ bits. This implementation theoretically computes a Boolean term
    $t$ with 30 variables and $2^{17}$ nodes in it's Boolean expression in time $2^{30}\delta$
    i.e., in about one second. Our experimental results are very close to this time.

    We note that the number of free variables is limited by  the size of internal memory and the size
    of the output vector $d$. The installation that we are using could admit
    the described computation with 35 free Boolean vectors.  With further partition
    of the particular problems the computation can be done in real time with up to 50 Boolean variables.
    For the most powerful modern supercomputers, these numbers respectively are 50 and 70.
    It is interesting that  these numbers were anticipated in [\ref{MijajloFB}], 15 years ago.

    \section{Computing finite models}

    Using a translation from $L_{\omega\omega}$ to $L_{\omega}$,
    we are able to state and computationally solve various problems on finite structures. There are
    attempts of this kind.  For example H. Zhang developed the system SATO for computing specific
    quasigroups, see [\ref{Zhang}]. There are many articles  with the similar approach on
    games, puzzles and design of particular patterns.
    An example of this kind is Lewis article [\ref{Lewis}] on Sudoku.

   \subsection{Translation from $L_{\omega\omega}$ to $L_\omega$}\label{Translation}

   A method for coding some notions, mostly of the combinatorial nature and
   related to countable first-order structures,
   by theories of propositional calculus $L_{\omega_1}$ is presented in [\ref{MijajloBorel}].
   The primary goal there was  to study the complexity of these notions in Borel hierarchy.
   The coding is given there by a map $\ast$. We reproduce this map adapted for our needs.

   Let $L$ be a finite first-order language and $L_A = L \cup \{\underline{a} | a \in A\}$,
   where $A$ is a finite non-empty set.
   Here  $\underline a$ is a new constant symbol, the name of the element $a$.
   We define the set $\P$   of propositional letters as follows
   \begin{equation}\label{P}
     \begin{array}{rll}
       \P \hskip -2mm&= &\{p_{Fa_1\ldots a_k b} |\, a_1,\ldots,a_k, b \in A, F \textrm{ is a $k$-ary function symbol of } L\}\,  \cup  \\
          \hskip -2mm&  &\{q_{Ra_1\ldots a_k b} |\, a_1,\ldots,a_k, b \in A, R \textrm{ is a $k$-ary relation symbol of } L\}
     \end{array}
   \end{equation}

   The map $\ast$ from the set Sent${}_{L_A}$ of all $L_{\omega\omega}$-sentences of $L_A$
   into the set of  propositional formulas of $L_{\omega}^\P$ is defined recursively as follows.
   \begin{equation}
     \begin{array}{c}
       \phantom{A} \hskip -0.9cm
       (F(\underline{a}_1,\ldots, \underline{a}_k) = \underline{b})^\ast \equiv  p_{Fa_1\ldots a_k b},\quad
                 (R(a_1,\ldots, a_k))^\ast     \equiv  q_{Ra_1\ldots a_k},                        \\
       \phantom{A} \hskip -5.23cm
       (F(\underline{a}_1,\ldots, \underline{a}_k) = F'(\underline{a}_1',\ldots, \underline{a}_k'))^\ast\equiv      \\
            \bigwedge_{b\in A}
            (F(\underline{a}_1,\ldots, \underline{a}_k)= \underline{b})^\ast \Rightarrow
            (F'(\underline{a}_1',\ldots, \underline{a}_k')= \underline{b})^\ast),                 \\
       \phantom{A} \hskip -3cm
       (F(t_1(\underline{a}_{11},\ldots, \underline{a}_{1m}),\ldots, t_k(\underline{a}_{k1},\ldots,\underline{a}_{km}))= \underline{b})^\ast \equiv    \\
                 \bigwedge_{(b_1,\ldots,b_k)\in A^k}\left(
                 \bigwedge_{i=1}^k (t_i(\underline{a}_{i1},\ldots, \underline{a}_{im})= \underline{b}_i)^\ast \Rightarrow p_{Fb_1\ldots b_k b}\right), \\
       \phantom{A} \hskip -3.66cm
       (R(t_1(\underline{a}_{11},\ldots, \underline{a}_{1m}),\ldots, t_k(\underline{a}_{k1},\ldots, \underline{a}_{km})))^\ast \equiv                  \\
                 \bigwedge_{(b_1,\ldots,b_k)\in A^k}\left(
                 \bigwedge_{i=1}^k (t_i(\underline{a}_{i1},\ldots, \underline{a}_{im})=\underline{b}_i)^\ast \Rightarrow q_{Rb_1\ldots b_k b}\right),  \\
       (\neg \varphi)^\ast  \equiv \neg\varphi^\ast,\quad
                 (\varphi \wedge \psi)^\ast \equiv \varphi^\ast \wedge \psi^\ast,\quad
                 (\varphi \vee \psi)^\ast \equiv \varphi^\ast \vee \psi^\ast,         \\
       (\forall x \varphi(x))^\ast \equiv \bigwedge_{a\in A} \varphi(\underline{a})^\ast,\quad
                 (\exists x \varphi(x))^\ast \equiv\bigvee_{a\in A} \varphi(\underline{a})^\ast.
     \end{array}
   \end{equation}

   The constants symbols from $L$ are handled in this definition of $\ast$ as $0$-placed function symbols.
   If $L$ has only one function symbol $F$, then we shall write $p_{b_1\ldots b_k b}$ instead of $p_{Fb_1\ldots b_k b}$.
   The similar convention is assumed for a relation symbol $R$. For example, if $\varphi$ is the sentence
   which states the associativity of the binary function symbol $\cdot$, it is easy to see that
   the $\ast$-transform of $i\cdot j =u$ is   $p_{iju}$  and that over domain
   $I_n= \{0,1,\ldots,n-1\}$, $\varphi^\ast$ is equivalent to
   \begin{equation}
      \bigwedge_{i,j,k,u,v,l < n} ((p_{iju} \wedge p_{jkv}\wedge p_{ukl}) \Rightarrow p_{ivl})
   \end{equation}

   If not stated otherwise, we assume that the domain of a finite model $\A$ having $n$ elements is $I_n= \{0,1,\ldots, n-1\}$.
   Observe that $\P$ is finite. If $\A$ is a model of $L$, note that the simple expansion
   $(\A, a)_{a \in A}$ is a model of $L_A$.

   \subsection{Correspondence between models of $T$ and $T^\ast$}\label{correspondence}

   Using  translation $\ast$, we give a method for
   constructing and counting finite models of first order theories for a finite language $L$.
   In the rest of the paper the notion of a labeled model will have the
   important role. Therefore we fix this and related concepts.

   Let $\A$ be a finite model of $L$, $|A|= n$. Any one-to-one and onto map
   $\alpha\colon I_n\to A$ will be called the labeling of $\A$.
   We can transfer the structure of $\A$ to a model $\A_\alpha$ with the domain $I_n$
   in the usual way:
   \begin{itemize}
      \item[1.]
        If $R\in L$ is is a $k$-placed relation symbol then we take  \\
        \phantom{A}\hskip 5mm  $R^{\A_\alpha}(i_1,\ldots,i_k)$ iff $R^{\A}(\alpha(i_1),\ldots,\alpha(i_k))$, $i_1\ldots,i_k\in I_n$.
      \item[2.]
        If $F\in L$ is is a $k$-placed function symbol then we take  \\
        \phantom{A}\hskip 5mm  $F^{\A_\alpha}(i_1,\ldots,i_k)= \alpha^{-1}(F(\alpha(i_1),\ldots,\alpha(i_k))$, $i_1\ldots,i_k\in I_n$.
      \item[3.] If $c\in L$ is a constant symbol then  $c^{\A_\alpha}= \alpha^{-1}(c^A)$.
   \end{itemize}

   We see that $\alpha\colon \A_\alpha \cong \A$. We shall call $\A_\alpha$ a labeled model of $\A$.
   Let $c_0,\ldots,c_{n-1}$ be new constant symbols to $L$ and $L'= L\cup \{c_0,\ldots,c_{n-1}\}$.
   The simple expansion
   $(\A,\alpha_0,\ldots,\alpha_{n-1})$ is a model of $L'$ such that $c_i$ is interpreted by
   $\alpha_i= \alpha(i)$, $0\leq i <n$.
   Instead of $(\A,\alpha_0,\ldots,\alpha_{n-1})$ we shall write shortly $(\A,\alpha)$.

   \begin{thm}\label{labeling}
      Assume $\A$ is a finite model of $L$, $|A|= n$ and $\alpha$, $\beta$ are labelings of $\A$.
      Then the following are equivalent
      \begin{itemize}
         \item[(1)] $(\A,\alpha) \equiv (\A,\beta)$, i.e., $(\A,\alpha)$ and $(\A,\beta)$ are elementary equivalent models,
         \item[(2)] $(\A,\alpha) \cong (\A,\beta)$,
         \item[(3)] $\A_\alpha = \A_\beta$,
         \item[(4)] $\alpha\circ\beta^{-1}\in \Aut(\A)$.
      \end{itemize}
   \end{thm}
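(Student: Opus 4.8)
The plan is to establish the four equivalences by running the cycle $(2)\Rightarrow(1)\Rightarrow(4)\Rightarrow(3)\Rightarrow(2)$, which is economical since each single implication is short once the right observation is isolated. The implication $(2)\Rightarrow(1)$ I would dispose of at once: isomorphic models are elementarily equivalent, so there is nothing to check.

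The substantive step is $(1)\Rightarrow(4)$. Here the decisive point is that in the expanded language $L'$ every element of $A$ is named by one of the constants $c_0,\ldots,c_{n-1}$, because $\alpha$ and $\beta$ are onto. I would set $\sigma=\alpha\circ\beta^{-1}$; being a composite of bijections it is a bijection of $A$, and $\sigma(\beta(j))=\alpha(j)$ for every $j$. It then remains to show $\sigma\in\Aut(\A)$ by testing it against atomic sentences. For a $k$-ary relation symbol $R$ and elements $a_i=\beta(j_i)$, the fact $R^\A(a_1,\ldots,a_k)$ is equivalent to $(\A,\beta)\models R(c_{j_1},\ldots,c_{j_k})$; by $(1)$ this holds iff $(\A,\alpha)\models R(c_{j_1},\ldots,c_{j_k})$, i.e.\ iff $R^\A(\sigma a_1,\ldots,\sigma a_k)$, so $\sigma$ preserves $R$. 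For a function symbol $F$, I would write the value $F^\A(a_1,\ldots,a_k)=\beta(j_{k+1})$ (possible since $\beta$ is onto) and transfer the atomic identity $F(c_{j_1},\ldots,c_{j_k})=c_{j_{k+1}}$ through $(1)$, obtaining $\sigma\bigl(F^\A(a_1,\ldots,a_k)\bigr)=F^\A(\sigma a_1,\ldots,\sigma a_k)$; constants of $L$ are treated identically, giving $\sigma(c^\A)=c^\A$. Hence $\sigma$ is an automorphism.

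For $(4)\Rightarrow(3)$, with $\sigma=\alpha\beta^{-1}\in\Aut(\A)$ I would simply unwind the definitions of $\A_\alpha$ and $\A_\beta$ on the common domain $I_n$, using $\alpha=\sigma\circ\beta$ and $\alpha^{-1}\sigma=\beta^{-1}$ together with the fact that $\sigma$ commutes with every operation and relation of $\A$ and fixes its constants; this yields $R^{\A_\alpha}=R^{\A_\beta}$, $F^{\A_\alpha}=F^{\A_\beta}$ and $c^{\A_\alpha}=c^{\A_\beta}$ termwise. For $(3)\Rightarrow(2)$, I would observe that $\alpha\colon\A_\alpha\cong\A$ and $\beta\colon\A_\beta\cong\A$ are isomorphisms, so when $\A_\alpha=\A_\beta$ their composite $\beta\circ\alpha^{-1}$ is an automorphism of $\A$ carrying $\alpha(i)$ to $\beta(i)$, i.e.\ carrying the interpretation of $c_i$ in $(\A,\alpha)$ to its interpretation in $(\A,\beta)$; thus it is an isomorphism $(\A,\alpha)\cong(\A,\beta)$.

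The main obstacle is the step $(1)\Rightarrow(4)$, as it is the only place where algebraic information must be extracted from a purely logical hypothesis. What makes it go through is precisely that the expansion names all of $A$, so elementary equivalence already decides every atomic sentence and the reduct structure is pinned down by its atomic diagram; the remaining implications are then routine verifications.
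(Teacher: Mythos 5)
Your proof is correct, but it is organized differently from the paper's. The paper proves the three equivalences pairwise: it disposes of $(1)\Leftrightarrow(2)$ in one line by invoking the standard fact that finite elementarily equivalent models are isomorphic, proves $(2)\Leftrightarrow(4)$ by reading off the automorphism $f$ with $f(\beta_i)=\alpha_i$ directly from an isomorphism of the expansions (and reversing), and proves $(1)\Leftrightarrow(3)$ by transferring atomic sentences $F(c_{i_1},\ldots,c_{i_k})=c_{i_{k+1}}$ and $R(c_{i_1},\ldots,c_{i_k})$ between $(\A,\alpha)$ and $(\A,\beta)$. You instead close a single cycle $(2)\Rightarrow(1)\Rightarrow(4)\Rightarrow(3)\Rightarrow(2)$, and your substantive step $(1)\Rightarrow(4)$ is essentially the paper's atomic-sentence computation repackaged so that it directly exhibits $\sigma=\alpha\circ\beta^{-1}$ as an automorphism. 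What your route buys is self-containedness: because the expansion names every element of $A$, equivalence on atomic sentences already pins everything down, so you never need the (true but not entirely trivial) theorem that finite elementarily equivalent structures are isomorphic --- in your arrangement $(1)\Rightarrow(2)$ comes out as a corollary of the cycle rather than an input. What the paper's route buys is locality and brevity: each pair of conditions is related by a short dedicated argument, and the appeal to the standard finite-model fact makes $(1)\Leftrightarrow(2)$ a one-liner. Both arguments are sound; yours is arguably the more elementary of the two.
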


   \begin{proof}
     It is well known that finite elementary equivalent models are isomorphic. Hence (1) is equivalent to (2).

     Suppose $(\A,\alpha_0,\ldots,\alpha_{n-1}) \cong (\A,\beta_0,\ldots,\beta_{n-1})$. So there is
     $f\in \Aut(\A)$ such that $f(\beta_i)= \alpha_i$,  $0\leq i <n$. Hence $f\circ\beta=\alpha$, so
     $\alpha\circ\beta^{-1}\in \Aut(\A)$. Therefore (2) implies (4). Reversing this proof,
     it also follows that (4) implies (2).

     Suppose $(\A,\alpha) \equiv (\A,\beta)$ and let $F\in L$ be a $k$-placed function symbol.
     Then  for any choice of constant symbols $c_{i_1},\ldots,c_{i_{k+1}}$,
     $(\A,\alpha)\models F(c_{i_1},\ldots,c_{i_k})= c_{i_{k+1}}$ if and only if
     $(\A,\beta)\models F(c_{i_1},\ldots,c_{i_k})= c_{i_{k+1}}$.
     Hence
     \begin{equation}
        F^\A(\alpha(i_1),\ldots,\alpha(i_k))= \alpha(i_{k+1})\quad \text{\rm iff} \quad
        F^\A(\beta(i_1),\ldots,\beta(i_k))= \beta(i_{k+1}),
     \end{equation}
     therefore
     $\alpha^{-1}(F(\alpha(i_1),\ldots,\alpha(i_k))= \beta^{-1}(F(\beta(i_1),\ldots,\beta(i_k))$
     for all $i_1,\ldots,i_k\in I_n$. Thus we proved that $F^{(\A,\alpha)} = F^{(\A,\beta)}$.
     Similarly we can prove that $R^{(\A,\alpha)} = R^{(\A,\beta)}$ for each
     relation symbol $R\in L$. Hence we proved that $\A_\alpha= \A_\beta$
     and so (1) implies (3).
     Similarly one can prove that (3) implies (1).
   \end{proof}

   Finite models of a first order theory $T$ which have for domains sets $I_n$ are called labeled models of $T$.
   By $\L_{T,n}$ we shall denote the set of all labeled models of $T$ of size $n$.
   By $T_n$ we denote the theory $T\cup \{\sigma_n\}$, where $\sigma_n$ denotes the sentence
   there are exactly $n$ elements. Therefore, $\L_{T,n}$ is the set of all labeled models of $T_n$.

   By a finite theory we mean a first order theory $T$ with finitely many axioms, i.e.,
   $T$ is a finite set of sentences of a finite language $L$.
   We can replace $T$ with a single  sentence, but in some cases we need to add or remove
   a sentence from $T$. In these cases, it is technically easier to work with a set of sentences then
   with a single sentence which replaces $T$.

   Suppose $T$ is a finite theory.
   Let $\P$ be the set of propositional letters defined by \ref{P} over  $A= I_n$ and the language $L$
   and let $T^\ast = \{\varphi^\ast|\, \varphi\in T\}$.
   Further, let $\M(T^\ast) \subseteq 2^\P$ denote the set of  all  models of $T^\ast$,
   i.e., valuations satisfying all propositional formulas in $T^\ast$.
   The following construction describes the correspondence between labeled models of $T$
   and models of $T^\ast$.

   The function $h$ which assigns to each $\mu\in \M(T^\ast)$ a labeled model $h(\mu) = \A$  of $T$ is defined as follows.
   Let $a_1,\ldots,a_k,b\in I_n$. Then

   If $F\in L$ is an $k$-placed function symbol, then
   \begin{equation}\label{Fmu}
      F^{\A}(a_1,\ldots, a_k) = b\quad \textrm{iff}\quad \mu(p_{Fa_1\ldots a_k b}) = 1.
   \end{equation}

   If $R\in L$ is an $k$-placed relation symbol, then
   \begin{equation}\label{Rmu}
     \A\models R[a_1,\ldots, a_k] \quad \textrm{iff}\quad \mu(q_{Ra_1\ldots a_k}) = 1.
   \end{equation}

   By induction on the complexity of the formula $\varphi$, it is easy to prove that $\A\in\L_{T,n}$
   and if ƒ$\mu\not= \nu$, then for the corresponding $\A_\mu$ and $\A_\nu$
   we have $\A_\mu\not =\A_\nu$.   Hence, map $h\colon \M(T^\ast) \to \L_{T,n}$ is one-to-one.
   On the other hand, assume $\A\in\L_{T,n}$. We can use \ref{Fmu} and \ref{Rmu}
   now to define the  valuation $\mu_\A$.
   Since $\A$ is a model of $T$, it follows that $\mu_\A \in \M(T^\ast)$. Hence, $h$ is onto.
   Therefore we proved:
   \begin{thm}
     The map $h$ codes the models in $\L_{T,n}$  by models of $T^\ast$.
   \end{thm}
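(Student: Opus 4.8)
The plan is to show that $h$ is a bijection between $\M(T^\ast)$ and $\L_{T,n}$; since $h$ respects all the data of a labeled model, to say that $h$ \emph{codes} the models just means that it is one-to-one and onto, and the paragraph preceding the statement already carries out both halves once one key fact is in place. That fact is a single correspondence lemma proved by induction, from which injectivity, surjectivity, and the very well-definedness of $h$ all follow.

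First I would isolate the lemma: for every valuation $\mu$ under which the clauses \ref{Fmu} genuinely define total operations, writing $\A = h(\mu)$, every sentence $\varphi$ of $L_A$ satisfies
\[
  \A \models \varphi \iff \mu(\varphi^\ast) = 1 .
\]
This is proved by induction on the complexity of $\varphi$, tracking the recursive clauses that define $\ast$. The atomic cases for $F(\underline a_1,\ldots,\underline a_k)=\underline b$ and $R(\underline a_1,\ldots,\underline a_k)$ are exactly \ref{Fmu} and \ref{Rmu}; the connectives are immediate because $\ast$ commutes with $\neg,\wedge,\vee$; and the quantifier cases collapse to the finite conjunction and disjunction over $A=I_n$ precisely because every element of $\A$ is named by a constant $\underline a$. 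The hard part will be the atomic clauses for compound (nested) terms, e.g. $F(t_1,\ldots,t_k)=\underline b$, where $\ast$ quantifies over all possible values $(b_1,\ldots,b_k)$ of the subterms and threads implications into $p_{Fb_1\ldots b_k b}$; verifying the equivalence there needs an auxiliary induction on term complexity, using that each subterm $t_i$ evaluates in $\A$ to a unique element that the implications single out. This is exactly the point at which totality of the interpretations defined by \ref{Fmu} gets used.

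That totality is the one hypothesis not automatic for an arbitrary $\mu\in 2^\P$, so I would address it explicitly. For $h(\mu)$ to be a genuine $L$-structure one needs, for each function symbol $F$ and each tuple $(a_1,\ldots,a_k)$, a \emph{unique} $b$ with $\mu(p_{Fa_1\ldots a_k b})=1$. I would note that this holds throughout $\M(T^\ast)$ once the existence-and-uniqueness constraint corresponding to the logically valid sentence $\forall \bar x\,\exists! y\,(F(\bar x)=y)$ is read into $T^\ast$ (equivalently, one restricts $h$ to the valuations of $\M(T^\ast)$ that encode a structure). With this understood, $h(\mu)$ is a well-defined labeled model with domain $I_n$.

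With the lemma in hand the remaining claims are short. For injectivity: if $\mu\neq\nu$ they differ on some $p_{Fa_1\ldots a_k b}$ or some $q_{Ra_1\ldots a_k}$, and by \ref{Fmu} and \ref{Rmu} the structures $h(\mu)$ and $h(\nu)$ then disagree on a basic operation or relation, so $h(\mu)\neq h(\nu)$. For surjectivity: given $\A\in\L_{T,n}$, define $\mu_\A$ by reading \ref{Fmu} and \ref{Rmu} from right to left; since $\A\models\varphi$ for every $\varphi\in T$, the lemma yields $\mu_\A(\varphi^\ast)=1$, whence $\mu_\A\in\M(T^\ast)$, and by construction $h(\mu_\A)=\A$. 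Finally each $\A=h(\mu)$ has domain $I_n$ and hence satisfies $\sigma_n$, so $\A$ lies in $\L_{T,n}$, and $h$ is the asserted bijection.
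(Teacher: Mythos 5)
Your proof takes essentially the same route as the paper: the paper's argument is the paragraph preceding the theorem statement, which defines $h$ by the clauses \ref{Fmu} and \ref{Rmu}, asserts by induction on formula complexity that $h(\mu)\in\L_{T,n}$, reads injectivity directly off the atomic clauses, and obtains surjectivity by reversing the construction to produce $\mu_\A$. Your version is more careful on one point the paper leaves implicit --- an arbitrary $\mu\in 2^\P$ need not make clause \ref{Fmu} define total, single-valued operations, so one must either read the functionality constraints into $T^\ast$ or restrict $h$ to valuations that encode a structure --- and your explicit correspondence lemma ($\A\models\varphi$ iff $\mu(\varphi^\ast)=1$, with the auxiliary induction on nested terms) correctly fills in what the paper compresses into ``it is easy to prove.''
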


   This theorem is our starting point in finding finite models of $T$
   of size $n$. As $T^\ast$ is finite, we can replace it with a single
   propositional formula $\theta= \bigwedge_{\psi\in T^\ast} \psi$.
   Obviously, we may consider $\theta$ as a Boolean term $t(\Svm)$.
   Computing $t^{\O_m}(b_1,\ldots,b_m)$ in free Boolean algebra $\O_m$
   for free generators $b_1,\ldots,b_m$,
   we obtain the vector $b$ which by Proposition \ref{prop22}
   codes the full DNF of $\theta$, hence all models of $T^\ast$.
   This gives us all labeled models of $T$ of   size $n$ via the map $h$.

   Let $l_{T,n}$ denote the cardinality of $\L_{T,n}$.
   Obviously, $l_{T,n}$ is equal to the number of bits in  vector $b$ which are equal to 1.

   The mayor target in finite model theory is to count or
   to determine non-isomorphic models of $T$ of   size $n$.
   By $\M(T)_n$ we denote a maximal set of non-isomorphic models of $T$ with
   the domain $I_n$. Elements of this set are also called unlabeled models of $T$.
   By $\kappa_{T,n}= |\M(T)_n|$ we denote the number of non-isomorphic (unlabeled) models of
   $T$ of   size $n$. If a theory $T$ is fixed in our discussion,
   we often omit the subscript $T$ in these symbols. In other words, we shall
   simply write $\L_{n}$, $l_{n}$, $\M_n$ and $\kappa_n$.
   In our examples, the following theorem will be useful in finding numbers $l_{n}$ and $\kappa_n$.
   \begin{thm}{\rm (}{Frobenius - Burnside counting lemma}{\rm )}
      Let $\A$ be a finite model, $|A|= n$.
      Then the number of models isomorphic to $\A$
      which have the same domain $A$ is equal to  $n!/|\Aut(\A)|$.

      If $T$ is a theory of a finite language $L$ with finite number of axioms, then
      \begin{equation}
         l_n = \sum_{\A\in \M_n} \frac{n!}{|\Aut(\A)|}.
      \end{equation}
   \end{thm}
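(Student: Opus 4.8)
The plan is to prove the two assertions in turn, obtaining the formula for $l_n$ from the first count together with the partition of labeled models into isomorphism classes. First I would establish that the number of models with domain $A$ isomorphic to $\A$ is $n!/|\Aut(\A)|$. The natural tool is the action of the full symmetric group $\Sym(A)$ on the set of all $L$-models with domain $A$ by transport of structure: a permutation $\sigma$ sends $\A$ to the model $\sigma\A$ obtained by carrying each relation, function and constant of $\A$ along $\sigma$, exactly as in the transfer of structure described before Theorem \ref{labeling}. Under this action the orbit of $\A$ is precisely the set of models with domain $A$ isomorphic to $\A$, since $\sigma$ is by construction an isomorphism $\A\cong\sigma\A$, and conversely any isomorphism between two models sharing the domain $A$ is itself a permutation of $A$ exhibiting the target as $\sigma\A$. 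The stabilizer of $\A$ is by definition $\Aut(\A)$, so the orbit–stabilizer theorem gives orbit size $|\Sym(A)|/|\Aut(\A)| = n!/|\Aut(\A)|$, which is the first claim. Alternatively, and more in keeping with the development above, I could run the same count through Theorem \ref{labeling}: the map $\alpha\mapsto\A_\alpha$ from the $n!$ labelings of $\A$ onto the labeled models isomorphic to $\A$ identifies two labelings $\alpha,\beta$ precisely when $\alpha\circ\beta^{-1}\in\Aut(\A)$, by the equivalence of (3) and (4); its fibers are therefore the orbits of $\Aut(\A)$ acting on labelings by post-composition, each of cardinality $|\Aut(\A)|$, so the image again has $n!/|\Aut(\A)|$ elements.

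Next, for the formula for $l_n$, I would observe that every labeled model in $\L_n$ is isomorphic to exactly one representative $\A\in\M_n$, because $\M_n$ is by definition a maximal set of pairwise non-isomorphic models with domain $I_n$. This sorts $\L_n$ into disjoint blocks indexed by $\M_n$, where the block attached to $\A$ consists of all labeled models of $T$, hence of domain $I_n$, isomorphic to $\A$. Taking $A=I_n$ in the first part, each such block has exactly $n!/|\Aut(\A)|$ elements, and summing the block sizes over $\A\in\M_n$ yields $l_n = \sum_{\A\in\M_n} n!/|\Aut(\A)|$.

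I do not expect a serious obstacle, since the essential work is already carried out by Theorem \ref{labeling}; the only points requiring care are verifying that transport of structure by $\sigma$ genuinely produces the entire isomorphism class on $A$, so that the orbit is neither too small nor too large, and checking that each block of $\L_n$ really consists of labeled models satisfying $T$. The latter is immediate because $T$ is preserved under isomorphism and the block representative lies in $\M_n\subseteq\L_n$, so every member of its block, being isomorphic to a model of $T$, is again a labeled model of $T$.
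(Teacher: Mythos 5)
Your proposal is correct and, in its second variant, is exactly the paper's argument: the paper proves this by noting it ``immediately follows from Theorem \ref{labeling} and direct application of Lagrange's subgroup theorem on the symmetric group,'' which is precisely your count of the fibers of $\alpha\mapsto\A_\alpha$ as cosets of $\Aut(\A)$ in $S_n$. The orbit--stabilizer phrasing in your first variant is only a cosmetic repackaging of the same Lagrange count, so no genuinely different route is involved.
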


   Note that this theorem immediately follows from theorem \ref{labeling} and   direct
   application of Langrange's subgroup theorem on the symmetric group $S_n$ of $I_n$.

   It is said that a set of models $\K$ is adequate for  $n$-models of $T$ if  $\M_n\subseteq \K\subseteq \L_n$.
   Even for small $n$ the set $\L_n$ can be very large. On the other hand,
   it is possible in some cases  to generate easily all labeled models,
   or to determine $l_n$ from $|\K|$ for
   an adequate family $\K$ of the reasonable size.
   Also, it is commonly hard to generate directly non-isomorphic models of $T$,
   or to compute $\kappa_n$. But for a well chosen adequate set of models these
   tasks can be done. Adequate families are usually generated by filtering
   $\L_n$,  fixing some constants or definable subsets in models of $T$, or
   imposing extra properties, for example adding a new sentences to $T$.
   In our examples  some instances of adequate families will be given.

  \subsection{Killing variables}\label{KillVar}

  Suppose a theory $T$ describes a class of finite models.
  The set of propositional letters $\P$ defined by $\ref{P}$ and which appears in translation from $T$ to
  $T^\ast$  is large even
  for  small $n$ for domains $A=I_n$ from which $\P$ is generated.
  For example, if the language $L$ consists of $k$ unary operations, then
  $|\P|= kn$. If $L$ has only one binary operation $R$, then $|\P| = n^2$.
  If $L$ has only one binary operation $F$, then $|\P|= n^3$.
  Hence, even for small $n$, $\P$ can be enormously large. It can have hundreds,
  or even thousands of propositional variables.
  Hence, we need a way to eliminate some propositional variables appearing
  in  $T^\ast$. Any procedure of elimination variables from $\P$ we shall call killing variables.
  As we have seen, the size of $\P$ which appears in $T^\ast$  and is feasible for computing on
  small computers is bellow $50$ and on supercomputers  below $70$.
  Let us denote by $K$ this feasible number of
  variables\footnote{Hence $50 \leq K\leq 70$ for today's computers}.
  The main goal of killing variables is to reduce $T^\ast$ to a
  propositional theory $T'$ having at most $K$ variables.
  We note that killing variables in general produces
  an adequate set of structures, not  the whole $\L_n$.

  Killing variables  is reduced in most cases by fixing the values of certain variables.
  For example, if $p_{ijk}$ represents a binary operation $i\cdot j=k$, $i,j,k\in A$,
  and if it is known that for some $a,b,c\in A$, $p_{abc}=1$, then for all
  $d\in A$, $d\not= c$, we may take $p_{abd}= 0$.
  The next consideration explains in many cases this kind of killing variables.
  It is related to the definability theory and for notions and terminology we shall
  refer to [\ref{Keisler}].

  Suppose $\A$ is a model of $L$ and $X\subseteq A$. We say that $X$ is absolutely invariant in $\A$
  if for all $f\in \Aut(\A)$, $f(X)\subseteq X$. As usual, $X$ is definable in $\A$ if
  there is a formula $\varphi(x)$ of $L$ so that
  $X= \{a\in A\colon \A\models \varphi[a]\}$.
  The proof of the next theorem is based on the the Svenonius definability theorem,
  cf. [\ref{Svenonius}], or Theorem 5.3.3 in [\ref{Keisler}].

  \begin{thm}
     Let $\A$ be a finite model of $L$ and $X\subseteq A$. Then $X$ is absolutely
     invariant in  $\A$ if and only if $X$ is definable in $\A$.
  \end{thm}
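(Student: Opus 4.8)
The plan is to prove the two implications separately, with the forward implication (definable $\Rightarrow$ invariant) routine and the converse (invariant $\Rightarrow$ definable) resting on the Svenonius definability theorem.

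First I would dispose of the easy direction. Suppose $X= \{a\in A\colon \A\models\varphi[a]\}$ for some $L$-formula $\varphi(x)$. Every $f\in\Aut(\A)$ preserves satisfaction of all $L$-formulas, so $\A\models\varphi[a]$ iff $\A\models\varphi[f(a)]$; hence $a\in X$ iff $f(a)\in X$, giving $f(X)=X$ and in particular $f(X)\subseteq X$. (This direction needs no finiteness.) I would also record that, since $\A$ is finite and $f$ is a bijection, $f(X)\subseteq X$ already forces $f(X)=X$, so absolute invariance coincides with honest setwise invariance under $\Aut(\A)$.

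For the converse I would introduce a fresh unary predicate symbol $P$, work in $L'=L\cup\{P\}$, let $(\A,X)$ be the $L'$-expansion interpreting $P$ as $X$, and set $T=\mathrm{Th}(\A,X)$, its complete $L'$-theory. The load-bearing observation is that in \emph{every} model $(\B,Y)$ of $T$ the set $Y$ is invariant under all automorphisms of the $L$-reduct $\B$. This is exactly where finiteness is used: $T$ is the complete theory of a finite structure, so every model of $T$ is isomorphic to $(\A,X)$; any $L$-automorphism of such a $\B$ transports, along that isomorphism, to an $L$-automorphism of $\A$, which fixes $X$ setwise by hypothesis, and therefore fixes $Y$.

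Having verified this global hypothesis, I would apply the Svenonius definability theorem (Theorem 5.3.3 in [\ref{Keisler}]) to $T$ and $P$. Invariance of $P$ under $L$-automorphisms across all models of $T$ yields finitely many $L$-formulas $\sigma_1(x),\ldots,\sigma_m(x)$ with
\begin{equation*}
  T\models \bigvee_{i=1}^m \forall x\,\bigl(P(x)\leftrightarrow\sigma_i(x)\bigr).
\end{equation*}
Specializing to the single model $(\A,X)$, some disjunct must hold: for some $i$, $(\A,X)\models\forall x\,(P(x)\leftrightarrow\sigma_i(x))$, i.e. $X=\{a\in A\colon \A\models\sigma_i[a]\}$, so $\sigma_i$ defines $X$ in $\A$. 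The only genuine obstacle is passing from the single-model hypothesis of the statement to the all-models hypothesis that Svenonius requires, which is precisely what the finiteness of $\A$ supplies; I would also double-check that the form of Svenonius invoked is the \emph{disjunction of explicit definitions}, since it is that form—rather than one uniform formula valid in all models—that is available here, and it is all we need for the fixed structure $\A$.
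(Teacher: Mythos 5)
Your proof is correct and takes essentially the same route as the paper's: both reduce the hard direction to the Svenonius definability theorem in its disjunctive form by using finiteness to globalize the single-structure invariance hypothesis to all models of a suitable $L\cup\{P\}$-theory, and then specialize the resulting disjunction of explicit definitions to $(\A,X)$. The only (cosmetic) difference is that you take $T=\mathrm{Th}(\A,X)$ and invoke categoricity of complete theories of finite structures plus transport of automorphisms, whereas the paper writes down explicit sentences $\psi_1(U)$ (asserting invariance of $U$ in every $n$-element model) and $\psi_2$ (coding $\A$ up to isomorphism) and verifies the Svenonius hypothesis directly from them.
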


  \begin{proof}
    Obviously, if $X$ is definable then it is absolutely invariant.
    So we proceed to the proof of the other direction.
    In order to save on notation, we shall take $L=\{R\}$,
    $R$ is a binary relation symbol.
    Suppose $X$ is invariant under all automorphisms of $\A$. Let
    $\psi_1(U)$ be the following sentence of $L\cup \{U\}$, $U$ is a new
    unary predicate:
    \begin{equation}
       \begin{array}{l}\displaystyle
          \forall x_1\ldots x_n\forall y_1\ldots y_n
          (
            ( \bigwedge_{i<j} x_i\not= x_j \wedge \bigwedge_{i<j} y_i\not= y_j \wedge
              \bigwedge_{i,j}(R(x_i,x_j)\Leftrightarrow R(y_i,y_j)) )              \\
            \phantom{A} \hskip 1cm
               \displaystyle\Rightarrow \bigwedge_i (U(x_i)\Rightarrow U(y_i))
          ).
       \end{array}
    \end{equation}

     The sentence $\psi_1(U)$ states that $U$ is absolutely invariant in any model $\B$ of $L$
     which has $n$ elements,
     i.e., if $(\B,Y)\models \psi_1(U)$ then $Y$ is absolutely invariant in $\B$.

     Let $\psi_2$ be the following sentence of $L$:
     \begin{equation}\label{psi2}
          \displaystyle
          \exists x_1\ldots x_n
          (
            \bigwedge_{i<j} x_i\not= x_j \wedge \forall x \bigvee_i x= x_i \wedge
              \hskip -2mm \bigwedge_{R^\A(i,j)}\hskip -3 mm R(x_i,x_j) \wedge
              \hskip -3mm \bigwedge_{\neg R^\A(i,j)} \hskip -4.5mm \neg R(x_i,x_j)
          ).
     \end{equation}

  We see that the sentence $\psi_2$ codes the model $\A$, i.e., if $\B$ is a model of $L$ and
  $\B\models \psi_2$ then $\B\cong \A$.

  Let $\psi(U)= \psi_1(U)\wedge \psi_2$. Suppose $\B$ is any model of $L$, $(\B,Y)$ and $(\B,Y')$ are
  expansion of $\B$ to models of $\psi(U)$ and assume $(\B,Y)\cong (\B,Y')$.
  Then we see that $Y=Y'$. Therefore,  by Svenonius theorem it follows
  that $\psi$ defines $U$ explicitly up to disjunction.
  In other words there are formulas $\varphi_1(x),\ldots,\varphi_m(x)$
  of $L$ such that
  \begin{equation}
     \psi(U)\models \bigvee_i \forall x(U(x) \Leftrightarrow \varphi_i(x))
  \end{equation}

  As $(\A,X)\models \psi(U)$, there is $i$ so that
  $ (\A,X)\models \forall x(U(x) \Leftrightarrow \varphi_i(x))$.
  Hence $X$ is definable by $\varphi_i(x)$.
  \end{proof}

  The following corollaries follow by direct application of the last theorem
  to one-element absolutely invariant subsets.
  \begin{cor}
    Let $\A$ be a finite model of finite $L$ and $a\in A$. If $a$ is fixed by all automorphisms of $\A$
    then $a$ is definable in $\A$ by a formula $\varphi(x)$ of $L$.
  \end{cor}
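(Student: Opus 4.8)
The plan is to apply the immediately preceding theorem to the one-element set $X = \{a\}$, so that the whole argument becomes a pure specialization of that result. First I would check that the hypothesis of the corollary is exactly the absolute invariance of $\{a\}$ in $\A$. By the definition given just before the theorem, $\{a\}$ is absolutely invariant iff $f(\{a\}) \subseteq \{a\}$ for every $f \in \Aut(\A)$; since each such $f$ is a bijection of $A$, this condition is equivalent to $f(a) = a$ for all $f \in \Aut(\A)$, which is precisely the assumption that $a$ is fixed by every automorphism of $\A$. Hence the singleton $\{a\}$ is absolutely invariant in $\A$.

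Next I would invoke the theorem to conclude that $\{a\}$ is definable in $\A$, i.e., that there is a formula $\varphi(x)$ of $L$ with $\{a\} = \{b \in A \colon \A \models \varphi[b]\}$. Unwinding the notion of definability of a subset recalled before the theorem, this is exactly the assertion that $\varphi(x)$ defines the element $a$, which is what the corollary claims. I do not expect a genuine obstacle in this argument: the only thing that needs care is the bookkeeping translation between \emph{set-definability} of $\{a\}$ and \emph{element-definability} of $a$, and this is immediate from the definitions. All the substantive content — in particular the appeal to the Svenonius definability theorem — has already been discharged in the proof of the theorem, so the corollary merely records what that theorem yields in the special case of a one-element invariant set.
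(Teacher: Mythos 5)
Your proof is correct and is exactly the paper's intended argument: the paper states that the corollary ``follows by direct application of the last theorem to one-element absolutely invariant subsets,'' which is precisely your specialization to $X=\{a\}$ together with the observation that absolute invariance of $\{a\}$ is equivalent to $a$ being fixed by every automorphism.
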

  \begin{cor}
    Let $\A$ be a finite model of finite $L$.  Then
      $\Aut(\A)= \{i_A\}$ if and only if
      every element of $A$ is definable in $\A$.
  \end{cor}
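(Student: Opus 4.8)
The plan is to deduce this corollary directly from the preceding theorem (absolute invariance equals definability) and its one-element specialization in the preceding corollary, treating the two implications separately. No new machinery is needed; everything reduces to applying those results to singletons $\{a\}$.

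First I would treat the forward direction. Suppose $\Aut(\A)= \{i_A\}$. Then for every $a\in A$ the singleton $\{a\}$ is absolutely invariant in $\A$, since the only automorphism is the identity $i_A$, which satisfies $i_A(\{a\})\subseteq\{a\}$ trivially. Hence the preceding corollary (equivalently, the preceding theorem applied to the absolutely invariant set $X=\{a\}$) supplies a formula $\varphi_a(x)$ of $L$ with $\{a\}= \{b\in A\colon \A\models\varphi_a[b]\}$. As $a$ ranges over $A$, this shows every element of $A$ is definable.

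For the converse I would argue that definable points are necessarily fixed by every automorphism. Suppose each $a\in A$ is defined by some $\varphi_a(x)$, and let $f\in\Aut(\A)$. Since automorphisms preserve satisfaction of first-order formulas, from $\A\models\varphi_a[a]$ we obtain $\A\models\varphi_a[f(a)]$; but $a$ is the unique realization of $\varphi_a$ in $\A$, so $f(a)=a$. Because this holds for every $a\in A$, the map $f$ agrees with the identity everywhere, i.e.\ $f=i_A$. Therefore $\Aut(\A)=\{i_A\}$.

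I do not expect a genuine obstacle here: both implications are short once the preceding theorem is in hand. The only points requiring care are that ``definable'' is understood as definability by a one-place formula isolating a singleton, and that one invokes the standard fact that automorphisms respect the satisfaction of formulas (so that definable sets, and in particular definable points, are preserved). All the substantive content — the appeal to the Svenonius definability theorem — has already been discharged in the proof of the preceding theorem, which may be assumed.
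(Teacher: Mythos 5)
Your proposal is correct and follows exactly the route the paper intends: the paper derives this corollary by ``direct application of the last theorem to one-element absolutely invariant subsets,'' which is precisely your forward direction, and your converse is the standard fact that automorphisms preserve definable singletons. No discrepancies.
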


  Here are  other examples of absolutely invariant, and hence definable subsets $X$ in various types of
  finite structures $\A$. If $\sim$ is a relation of equivalence over $A$ and $k\in N$, then
  $X=$ "the union of all classes of equivalences of   size $k$" is absolutely invariant.
  Let $\A= (A,\leq)$ be a partial order. Then the set $S$ of all minimal elements and
  the set $T$ of all maximal elements of $\A$ are absolutely invariant.
  The same holds for the set of all minimal elements of $A\backslash S$.
  In groups, characteristic subgroups, such as the center and the commutator subgroup,
  are absolutely invariant.


  In our examples we shall often use the following argument.
  Let $T$ be a finite theory of $L$ and
  assume $\varphi_0(x),\ldots,\varphi_{k-1}(x)$ are  formulas of $L$ for which  $T$
  proves they are  mutually disjoint, i.e., for $i\not = j$,
  $T\vdash \neg \exists x(\varphi_i(x)\wedge \varphi_j(x))$. Assume they define
  constants in $T$, in other words,  for each $i$
  \begin{equation}
    T\vdash \exists x(\varphi(x)\wedge \forall y (\varphi_i(y) \Rightarrow \varphi_i(x)).
  \end{equation}

  Let $\B$ be a model of $T$, $B= I_n$ and $\B\models T$.
  Then $\B$ has a unique expansion to $(\B,b_0,\ldots,b_{k-1})$ which is a model of
  $T'= T\cup \{\varphi_0(c_0),\ldots,\varphi_k(c_{k-1})\}$,  $c_0,\ldots,c_{k-1}$ are
  new symbols of constants to $L$. Since $b_i\not= b_j$ for $i\not= j$ we can define
  \begin{equation}\label{fIk}
     f\colon I_k \to \{b_0,\ldots,b_{k-1}\},\quad f(i)= b_i,\quad i=1,\ldots,k.
  \end{equation}

  It is easy to see that we can define labeled model $\A$ of $L$
  and that $f$ extends to $h\colon (\A,0,\ldots,k-1) \cong (\B,b_0,\ldots,b_{k-1})$.
  Hence, for an adequate set of $n$-models of $T$
  we can choose a set $\K$   of labeled models $\A$ of $T$
  such that $(\A,0,\ldots,k-1)$ is a model of $T'$. Therefore, models in $K$ have
  the fixed  labelings by $0,\ldots,k-1$ of constants definable in $T$.

  Obviously, we can take in \ref{fIk} any $S\subseteq I_n$, $|S|=k$ instead
  of $I_k$. There are $\binom{n}{k}$ such choices of $S$.
  Let $s$ denote a permutation $s_0\ldots s_{k-1}$ of $S$ and
  $\K_s$ the corresponding adequate set for $n$-models of $T$:
  for models $\A$ in $\K_s$, $(\A,s_0,\ldots,s_{k-1})$ is a model of $T'$.
  In other words, definable elements formerly labeled by $0,\ldots,k-1$
  in models of $\K$ they are labeled now in $\K_s$ by $s_0,\ldots,s_{k-1}$.
  Suppose  $S$ and $S'$ are $k$-subsets of $I_n$  and
  $s$, $s'$  permutations either of $S$ or $S'$, $s\not= s'$.
  Then $\K_s\cap \K_{s'}= \emptyset$ and $|\K_s| = |\K_{s'}|$.  Hence
    $\L_{T,n}= \bigcup_s \K_s$
  and so
  \begin{equation}
     l_{T,n}= \binom{n}{k}k!|\K| = n(n-1)\cdots (n-k+1) |\K|.
  \end{equation}

  In many cases   theory $T$ determines the values of atomic formulas
  which contains some of the definable constants. Hence, the corresponding
  propositional letter from $\P$ has a definite value.
  For example, suppose $R$ is a 2-placed relation symbol and that $T$ proves
  $\forall x R(c_0,x)$. Then we can take $p_{0i}=1$, $i= 0,\ldots,n-1$. Hence,
  if $\P$ is generated over $I_n$,  $n$ propositional variables
  are killed  in $\P$. The remaining number of variables is $n^2-n$.


  \subsection{Definable partitions}\label{dp}

   The presented idea with definable constants can be extended to definable subsets as well.
   For simplicity, we shall assume that $L= \{R\}$, where $R$ is a binary relation symbol.

   A sequence $\Delta= \theta_1(x),\ldots,\theta_m(x)$ of formulas of $L$ is called a definable partition for $T_n$ if
   $T_n$ proves:

   \begin{itemize}
      \item[1.] $\forall x(\theta_1(x)\vee \ldots \vee \theta_m(x))$.
      \item[2.] $\neg \exists x (\theta_i(x)\wedge \theta_j(x))$,\quad $1\leq i \leq j \leq m$.
   \end{itemize}

   We shall say that $\Delta$ is a good definable partition if there are formulas
   $S_{ij}(x,y)$,  $1\leq i,j\leq m$, such that each $S_{ij}(x,y)$ is one of
   $R(x,y)$, $R(y,x)$, $\neg R(x,y)$, $\neg R(y,x)$,
   and $T_n$ proves:
   \begin{equation}\label{good}
     \forall xy ((\theta_i(x) \wedge \theta_j(y)) \Rightarrow S_{ij}(x,y)),\quad 1\leq i \leq j \leq m.
   \end{equation}

   \begin{exm}
   It is easy to write first-order formula $\theta_k(x)$  which says  that $x$ has
   exactly $k$ R-connections with other elements. In other words, $\theta_k(x)$
   expresses that there  are exactly $k$ elements $y$ such that $R(x,y)$.
   Assume $T_n$ proves that $R$ is an acyclic graph. Then $k\leq l$ implies
   $(\theta_k(x)\wedge\theta_l(y))\Rightarrow \neg R(x,y)$. Hence, in this case definable
   partition $\theta_k(x)$ is good.
   \end{exm}

   In any labeled model $\A$ of $T_n$, $\Delta$ determines  sequence $\X$ of definable subsets
   $X_1,\ldots, X_m$. By a component we shall mean elements of $\X$.
   It may happen that some components are empty.
   The sequence of non-empty sets from
   $\X= (X_1,\ldots,X_m)$ form an ordered partition of $A$.
   A sequence $\X$ with this property will be called a $c$-partition.

   Our idea for using a good definable partition $\Delta$ in generating labeled models $\A$ of $T_n$
   is as follows. We assume that the propositional letter $p_{ij}$ represents
   $R^\A(i,j)$ as described by \ref{Rmu}.
   We generate all $c$-partitions $\X= (X_1,\ldots,X_m)$ of $I_n$ that are potentially components
   of $\A$,   taking that $X_i$ corresponds to $\theta_i$.
   For each $\X$ we assign values to particular $p_{ij}$ in the following way.
   If $S(x,y)$ is $R(x,y)$ then we set $p_{ij}=1$ for $i\in X_k$ and $j\in X_l$, $k\leq l$
   and if $S(x,y)$ is $\neg R(x,y)$, then we set $p_{ij}=0$. We assign similarly
   values to $p_{ij}$ if $S(x,y)$ is $R(y,x)$ or $\neg R(y,x)$.
   Therefore we obtained propositional theory $T_\X\subseteq {T^\ast}_{\hskip -2mm n}$
   with the reduced number of  unknowns from $\P$.
   Then set $\K_\X$ of labeled models corresponding to $T_\X$ in the sense of
   Subsection \ref{correspondence} is adequate for  set $\L_\X$ of all labeled models
   of $T_n$ in which $\Delta$ defines partition $\X$.

   Obviously, every model of $T_n$ is  isomorphic to a model $\A$ with domain $I_n$
   with canonical components $\X$
   \begin{equation}\label{c2}
      \begin{array}{l}\displaystyle
        X_1= \{0,1,\ldots,\alpha_1-1\},\, X_2= \{\alpha_1,\alpha_1 +1,\ldots,\alpha_1 + \alpha_2-1\},\ldots, \\
        \phantom{.}\hskip 5mm   X_m= \{\displaystyle\sum_{i<m} \alpha_i,\sum_{i<m} \alpha_i + 1,\ldots,\sum_{i\leq m} \alpha_i - 1\}.
      \end{array}
   \end{equation}

   Let us denote by $\PP$ the set of all $c$-partitions of $I_n$.
   Then every model $\A\in\L_\X$, $\X=(X_1,\ldots,X_n)$ is obtained from a model $\B\in \K_X$
   choosing component $X_1$ from $I_n$, then $X_2$ from $I_n\backslash X_1$, $X_3$ from $I_n\backslash \{X_1\cup X_2\}$
   and so on, until all $X_i$ from $\X$ are exhausted.
   Therefore
   \begin{equation}\label{ltn}
      l_{T,n}= \sum_{\X\in \PP}  \binom{\beta_1}{\alpha_1}\ldots \binom{\beta_k}{\alpha_k} |\K_\X|
   \end{equation}
   where $\X= (X_1,\ldots,X_k)$, $|X_i|$= $\alpha_i$ and
   \begin{equation}
      \beta_1= n,\quad \beta_2= \beta_1- \alpha_1,\quad \ldots,\quad \beta_k= \beta_{k-1} - \alpha_{k-1}.
   \end{equation}

   Note that if $\X \not= \Y$, $\X, \Y \in \PP$, and if $\A\in \K_X$ and $\B\in \K_Y$, then
   $\A$ and $\B$ are non-isomorphic. Hence, if $\kappa_{\X,n}$ is the number of
   non-isomorphic models in $\K_\X$, then
   \begin{equation}\label{kappan}
     \kappa_n = \sum_{\X\in \PP} k_{\X,n}.
   \end{equation}

   The following proposition is useful in estimation of the number of computing steps of $\K_\X$.

   \begin{prop}
      Assume $|A|=n$. Then there are
      \begin{equation}\label{cp}
        c_{nm}=\sum_{k=1}^m \binom{m}{k}\binom{n-1}{k-1}
      \end{equation}
      $c$-partitions $\X= (X_1,\ldots,X_m)$ of $A$.
   \end{prop}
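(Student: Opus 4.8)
The plan is to record each $c$-partition by its sequence of block sizes and then count those sequences. By the canonical description \ref{c2}, a $c$-partition $\X= (X_1,\ldots,X_m)$ is recovered completely from the tuple of sizes $\alpha_i= |X_i|$, subject only to $\alpha_i\geq 0$ and $\alpha_1+\cdots+\alpha_m= n$; conversely every such tuple comes from exactly one $c$-partition in $\PP$. Hence counting the $c$-partitions of $A$ amounts to counting the weak compositions of $n$ into $m$ ordered parts, parts being allowed to vanish.

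First I would split these tuples according to the number $k$ of non-empty components. Since $n\geq 1$ at least one part is nonzero, so $1\leq k\leq m$. To build a tuple with exactly $k$ non-empty components I would choose which $k$ of the $m$ positions carry a nonzero size; this can be done in $\binom{m}{k}$ ways. The selected $k$ positions must then be filled with strictly positive integers summing to $n$, i.e.\ with a composition of $n$ into $k$ positive summands.

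The number of such compositions is obtained by the standard stars-and-bars argument: inserting $k-1$ dividers into the $n-1$ gaps between $n$ aligned units produces exactly $\binom{n-1}{k-1}$ of them. Multiplying the two independent choices and summing over the admissible values of $k$ gives
\begin{equation}
   c_{nm}= \sum_{k=1}^m \binom{m}{k}\binom{n-1}{k-1},
\end{equation}
which is the asserted formula. There is no genuine obstacle here, as the argument is a routine double count; the only point deserving attention is the opening reduction, namely that because of the canonical labeling \ref{c2} a member of $\PP$ is faithfully encoded by its size tuple, so that empty components are admitted but distinct underlying sets are not counted twice. For an independent check one may note that the same sum equals $\binom{n+m-1}{m-1}$ by Vandermonde's convolution, which is precisely the direct stars-and-bars count of the weak compositions of $n$ into $m$ parts.
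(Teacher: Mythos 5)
Your argument is correct and follows essentially the same route as the paper's own proof: reduce to counting the size tuples $(\alpha_1,\ldots,\alpha_m)$ with $\alpha_i\geq 0$ summing to $n$, classify by the number $k$ of nonzero entries, and count $\binom{m}{k}$ choices of positions times $\binom{n-1}{k-1}$ positive compositions. Your explicit remarks on the reduction to size tuples and the closed form $\binom{n+m-1}{m-1}$ are welcome additions but do not change the substance.
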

   \begin{proof}
     Let $|X_i|= \alpha_i$.
     Therefore  $\alpha_1,\ldots,\alpha_m$ is an integer solution of
     \begin{equation}\label{eq2}
       n= x_1 + \ldots + x_m,\quad x_1,\ldots,x_m\geq 0.
     \end{equation}
     Since the integer solutions of
     \begin{equation}\label{eq3}
       n= x_1 + \ldots + x_k,\quad x_1,\ldots,x_k\geq 1.
     \end{equation}
   are obtained from \ref{eq2} by choosing $k$ variables $x_i\not=0$, $k\geq 1$, and \ref{eq3} has $\binom{n-1}{k-1}$ solutions,
   there are $\binom{m}{k}\binom{n-1}{k-1}$ solutions of \ref{eq2}. Hence, there are
   $\binom{m}{k}\binom{n-1}{k-1}$ $c$-partitions $\X$ with exactly $k$ nonempty sets $X_i$.
   Summing up for $k=1,\ldots,m$, we obtain expression \ref{cp} for $c_{nm}$.
   \end{proof}

   Hence, for an adequate set of models of $T_n$ we can take set $\K$ of labeled models $\A$ of $T_n$ with the
   components \ref{c2}. So our method  for computing models of $\K$ is as follows.
   As usual, the propositional letter $p_{ij}$ stands for $R(i,j)$.
   \medskip

   {\bf Counting procedure TBA}

   \begin{itemize}
     \item[1.] Find   good definable partition $\theta_1(x),\ldots,\theta_m(x)$
               which satisfies condition \ref{good}.
     \item[2.] Generate all $c$-partitions $\X= (X_1,\ldots,X_m)$ of $I_n$ with arrangements \ref{c2}.
     \item[3.] {\it Killing variables}: For all $1\leq k \leq l \leq n$ we
               fix the values of certain $p_{ij}$ as follows.
               Take $p_{ij}=1$ for $i\in X_k$ and $j\in X_l$ if $S(x,y)$ is $R(x,y)$.
               If $S(i,j)$ is $\neg R(i,j)$ then we take $p_{ij}= 0$.
               If $S(i,j)$ is $R(j,i)$, then $p_{ji}= 1$.
               If $S(i,j)$ is $R\neg R(j,i)$, then set $p_{ji}=0$.
     \item[4.] Reduce $T_n^\ast$ to $T_\X$ with the reduced number of variables using assigned values
               to variables $p_{ij}$ in the previous step.
     \item[5.] Generate and count models of $\K_\X$ using $T_\X$ and free Boolean vectors by the  procedure
               described in  section \ref{correspondence}.
     \item[6.] Find $\kappa_{X,n}$  by enumerating elements of $\K_X$.
     \item[7.] Repeat steps (5) and (6) until $\PP$ is exhausted.
     \item[8.] Compute $l_{T,n}$ by formula \ref{ltn}.
     \item[9.] Compute $\kappa_{T,n}$ by \ref{kappan}.
   \end{itemize}


  \section{Program implementation}

  We implemented the algorithms and ideas presented in the previous sections
  into a programming system which we shall call TBA. It is divided into
  two layers. The first one is implemented in OpenCL which we have chosen as
  a good framework for writing parallel programs that execute across heterogeneous
  platforms consisting of central processing units (CPUs) and graphics processing units (GPUs).
  This part of code manipulates with free Boolean vectors as described in subsection \ref{CBe}
  and it is invisible to the general user of TBA.
  The second layer is developed in Python programming language  and we used it to achieve  two goals.
  The first-one is to manipulate Boolean expressions as described in subsections
  \ref{Translation} and \ref{KillVar}. The second aim was to define new constructs in Python
  mainly related to the predicate calculus. The general user may use them into scripts
  to solve combinatorial problems using techniques such as described in
  subsection \ref{dp}. The main body of a script strictly follow the syntax of
  predicate calculus, but Python standard constructs can be embedded in the scripts as well.
  The user  executes the scripts  by TBA in the terminal mode.

  \subsection{TBA Core}

  The core of the system is a parallel computational engine that searches for models
  of a  Boolean formula $\varphi(x_1,\ldots,x_n)$.
  This part of TBA is generated in OpenCL language which is then compiled to binaries and executed.
  In a sense, the core uses brute force search over a problem space,
  but utilizing all of the available  bit level parallelism of the underlying hardware
  as described in Subsection \ref{CBe}.
  Whenever $\varphi(x_1,\ldots,x_n)$ is dispatched to the engine, it first partitions search space  $S$.
  The table of $S$ associated to $\varphi(x_1,\ldots,x_n)$
  is  of size $n\times 2^n$ and consists of free Boolean vectors.
  The partitioning of $S$ is done by slicing  this table into appropriate blocks and
  depends on the number of available processors and  memory.
  Due to the simplicity of the representation, the slicing scheme is
  very scalable. This enables us to choose a partition such that all
  cores of all of available processing units are used in parallel in further computation.

  In addition, the engine generates an efficient computing tree for $\varphi(x_1,\ldots,x_n)$,
  adapted to the actual parallel hardware  and hardware architecture.
  The implementation is done for both, GPU's and CPU's and it is on the user
  which implementation will be used.
  While for GPU's the advantage is the number of computing cores,
  for CPU's this is the length of the vector units and the processor's
  speed. Modern GPU's have more than 2000 computing 32 bit cores, while,
  in contrast, CPU's have four cores, 256 bit registers and
  up to four time faster clock speed.
  The approximative formula for the ratio between the speeds of the execution
  of our code on a GPU and on a CPU is:
  \begin{equation}\label{gpucpu}
     f= \frac{n_g b_g s_g}{n_c b_c s_c}
  \end{equation}
  where $n_g$ is the number of $b_g$-bit computing cores and
  $s_g$ is the number of clock cycles of GPU, while $n_c, b_c, s_c$ are the similar
  parameters for the CPU ($b_c$ is the number of bits of the vector unit).
  Hence, for the above mentioned configuration ($n_g= 2^{11}$, $b_g= 2^5$,
  $n_c = 2^2$, $b_c= 2^8$ and $s_c/s_g = 4$), we have $f=16$. Therefore,
  GPU's are superior to CPU's and our tests are in agreement with \ref{gpucpu}.

  There are also other submodules. Submodule {\tt Translate} translates
  predicate formulas into Boolean expressions according to the rules explained
  in Subsection  \ref{Translation}. It also build the computing tree
  of so obtained Boolean term. Another important submodule is
  {\tt Reduction} which reduces a Boolean term having constants $0$ and $1$
  to the expression without these constants. We observe that a Boolean
  expression may have several hundreds of thousands of characters, but
  {\tt Reduction} is limited not by the size of the expression,
  but only by the available computer's memory.

  \subsection{TBA scripts} TBA scripts are used to implement algorithms for
  generating and counting finite combinatorial structures such as specific graphs,
  orders, Latin squar\-es, automorphisms of first-order structures, etc.
  The user writes TBA scripts as txt files and they follow Python syntax.
  In general their structure consist of three parts.
  The first part contains definitions of domains over which combinatorial objects are generated.
  The second one consists of definitions of combinatorial structures by axioms written in the
  syntax of the predicate calculus. Propositional calculus is embedded into Python, but we had to
  expand it with bounded quantifiers in order to express predicate formulas having
  in mind finite structures as the main (and only) semantics.
  The quantifier extension of Python we named Python-AE, since
  we denoted by $A$ the universal quantifier  and the existential quantifier by $E$.
  Finally, the third part is used for killing variables, as described in
  subsection \ref{KillVar}. These parts are not strictly separated and they may overlap.

  Here is a simple example of a TBA script, named {\tt SO.txt}.
  It computes all partial orders over domain $S= \{0,1,\ldots,n-1\}$
  with a special element. The propositional letter $p_{ij}$ stands for $i\leq j$.
  An element $a\in S$ is special if it is comparable with all elements of domain $S$.
  \smallskip

  {\tt
    n= 6

    S= range(n)

    S2= perm(range(n),2)

    S3= perm(range(n),3)
    \smallskip

    f1= A[i,j:S2] ($\sim$p(i,j) | $\sim$p(j,i))

    f2= A[i,j,k:S3] ($\sim$(p(i,j)\& p(j,k)) | p(i,k))

    f3= E[i:S].A[j:S] (p(i,j) | p(j,i))
    \smallskip

    assumptions= $\{$p(i,i): 1 for i in S$\}$
  }
  \smallskip

  First four lines define domain $S= \{0,1,2,3,4,5\}$, set $S_2$ of ordered pairs of elements of $S$
  with distinct coordinates and $S_3$, the set of triplets. The next three lines define predicate formulas
  $\varphi_1, \varphi_2, \varphi_3$.
  Boolean operation signs are represented in the standard Python notation. Hence, the Python signs
  $ \sim, \&, |, \,\, \hat{}\,\, $
  stand respectively for $\neg, \wedge, \vee, +$, where
  $x+y = x\bar y \vee \bar x y$ (symmetric difference of $x$ and $y$).
  The construct $A[i:S]$ stand for the bounded universal quantifier  (in the manner of
  Polish logic school, eg [\ref{Mostowski}]) $\bigwedge_{i\in S}$. Similarly, $E[i:S]$ denotes
  the bounded existential quantifier $\bigvee_{i\in S}$.
  Hence, $f_1, f_2, f_3$ are Python-AE transcripts of the following predicate formulas,
  if $p_{ij}$ is read as $i\leq j$:
  \begin{equation}\label{SO}
    \begin{array}{l}\displaystyle
      \varphi_1= \bigwedge_{i,j\in S_2} (\neg p_{ij} \vee \neg p_{ji})    \\
      \displaystyle
      \varphi_2= \bigwedge_{i,j,k\in S_3}  (\neg(p(i,j) \wedge p(j,k)) \vee p(i,k)) \\
      \displaystyle
      \varphi_3= \bigvee_{i\in S} \bigwedge_{j\in S} (p_{ij} \vee p_{ji}).
    \end{array}
  \end{equation}

  Obviously, $\varphi_1$ states that  $\leq$ is antisymmetric ie,
  $\forall i,j\in S (i\leq j\wedge j\leq i \Rightarrow i=j)$.
  Further, $\varphi_2$ states that $\leq$ is transitive,
  $\forall i,j,k \in S(i\leq j\wedge j\leq k \Rightarrow i\leq k)$,
  assuming it is reflexive.
  The reflexivity is handled in the last line of the script.
  Finally, $\varphi_3$ states that the order has a special element, $\exists i\in S\, \forall j\in S (i\leq j \vee j\leq i)$.

  The last line states that $\leq$ is reflexive.
  It also kills variables $p_{ii}$, $i\in S$.
  The last line can be replaced by $\bigwedge_{i \in S}$,
  but during the execution of {\tt SO.txt} we would have then more
  free variables and the program would be less capable.
  Observe that there are all together $n^2$ variables $p_{ij}$ and that $n$
  variables are killed. Hence, during the execution of the script,
  there are $n^2 -n$ free variables.
  Our current implementation solves on GPU's systems of the Boolean equations which have
  up to 30 unknowns and on CPU's with up to 32 unknowns.
  Hence the script can be run for $n\leq 6$.
  More sophisticated examples which could be executed for much larger $n$
  are explained in the next section.

  The script is executed on a GPU (default case) by\,
  {\tt solve.exe --all SO.txt}

  \noindent
  and on a CPU:\,
  {\tt solve.exe --all --cpu SO.txt}.

  Output file {\tt out.txt} contains after execution all solutions of \ref{SO},
  ie, all models of propositional formulas which are $\ast$-transforms of
  formulas $\varphi_1, \varphi_2, \varphi_3$ in the sense of subsection \ref{Translation}.
  All partial orders $(S,\leq_\mu)$ with a special element are obtained then
  by choosing valuations (rows) $\mu$ from {\tt out.txt} and setting
  $i\leq_\mu j$ iff $\mu(p_{ij})=1$.

  Here are some general remarks and basic rules for Python-AE.
  Predicate formulas only with bounded quantifiers are allowed and must be in
  written in the prenex normal form. The quantifier-free part otherwise follows
  the Python syntax for Boolean expressions and must be parenthesized.
  Quantifiers are delimited from each others by the dot sign.

  Killing variables means setting values for some variables
  appearing in formulas of a TBA script file.
  Construction implemented in Python for killing variables is called {\tt assumptions}.
  Assumptions for killing variables are defined using Python dictionary structure.
  For example {\tt $\{$a:1, b:0$\}$} defines a dictionary which sets values of two variables: $a= 1, b= 0$.
  In this way, listing values of variable,
  any dictionary for killing variables can be constructed.
  A dictionary can be constructed also in other ways using Python syntax.
  \smallskip

  \noindent
  {\bf Example} (dictionary comprehension): {\tt assumptions= $\{$p(i):\,1 for i in S$\}$}.
  In this way we defined  {\tt p(i)=1} for all $i$ in $S$.
  \smallskip

  Dictionary which defines values of variables must be named {\tt assumptions}.
  The above example demonstrates killing variables using incremental method
  applied on assumptions (dictionary):
  An already existing dictionary (assumptions) is updated by
  the command {\tt assumptions.update}.
  If {\tt assumptions.update} refers to already killed variables, their values
  are set to new values defined by this command. Hence, the order of updating is important.

  A TBA script file {\tt file.txt} is executed in
  the terminal mode by {\tt solve.exe --all file.txt}. The result of
  the execution is placed in {\tt out.txt}.

  \section{Examples}

  The portable codes for executing programs in our system, explanation how to use them and all examples
  described in this paper and some additional ones, can be found at the address
  http://www.mi.sanu.ac.rs/$\sim$pejovica/tba.
  Most of our examples are tested against to the examples
  from the On-Line Encyclopedia of Integer Sequences (OEIS)\footnote{http://oeis.org}.
  In all cases, our results were in the agreement with the results which we found there.

  \subsection{Solving Boolean equations}

  Solving Boolean equations is the simplest use of our software.
  Any system of Boolean equation should be written in our system in the following way:
  \begin{equation}
    e_1= \varphi_1(x_1,\ldots,x_n),\ldots, e_k= \varphi_k(x_1,\ldots,x_n).
  \end{equation}
  The program finds all $(\alpha_1,\ldots,\alpha_n)\in 2^n$ such that
  $\varphi_(\alpha_1,\ldots,\alpha_n)\equiv 1$, $1\leq i \leq k$.

  Here is an
  example of two Boolean equations with unknowns $x, y, z, u$ (example BAequ4\_in.txt
  at the above address):
  \begin{equation}\label{BAequ}
     x + y + \bar z + u = 1, \quad  x \vee yz = u
  \end{equation}
  The second equation is equivalent to $\neg((x \vee yz) + u)= 1$.
  Hence, Python-AE file BAequ4\_in.txt
  solves \ref{BAequ} and contains only two lines:
  \begin{equation}
    \begin{array}{lll}
        e1 &= &x\,\, \hat{}\,\, y \hskip 1mm\hat{} \sim\hskip -1mm z \,\, \hat{}\,\, u  \\
        e2 &= &\sim((x\hskip 1mm |\hskip 1mm y\hskip 1mm\&\hskip 1mm z)\,\, \hat{}\,\, u)
    \end{array}
  \end{equation}
  File BAequ4\_in.txt is executed on a GPU (default case) by
  \smallskip

  {\tt solve.exe --all BAequ4\_in.txt BAequ4\_out.txt}.
  \smallskip

  \noindent
  and on a CPU by
  \smallskip

  {\tt solve.exe --all --cpu BAequ4\_in.txt BAequ4\_out.txt}.
  \smallskip

  \noindent
  Output file BAequ4\_out.txt contains after execution all solutions of \ref{BAequ}.

  \subsection{Ordered structures}\label{os}

    Let $T$ be the theory of partial orders of $L= \{\leq\}$ having at least
    $2$ elements with extra axioms which state there are
    the least  element and the greatest element  x.
    Instead of $T$ we can take the theory $T_1$ of partially ordered sets which are
    upward and downward directed. Theories $T$ and $T_1$ are not equivalent, for example
    $T_1$ has an infinite model which is not a model of $T$. But $T$ and $T_1$ have
    same finite models.

    We see that  $l_{T,n}= n(n-1) |\K|$, $n\geq 2$, where $\K$ is the set
    of all partial orders $\A=(A,\leq,0,n-1)$, $A=I_n$, $0$ is the least and $n-1$ is the greatest element in $\A$.
    Since $p_{ij}$ states $i\leq j$ and $\leq$ is reflexive, we can also take ($n\geq 2$)
    \begin{equation}\label{conspa}
       \begin{array}{l}
          p_{0i}= 1,\,\, p_{j0}= 0,\,\, p_{i1}=1,\,\, p_{1k}= 0,\,\, p_{ii}=1,   \\
          i=0,\ldots,n-1,\,\, j= 1,\ldots,n-1,\,\, k= 0,\ldots, n-2.
       \end{array}
    \end{equation}
    Hence, $5n-6$ variables are killed and $T^\ast$ is reduced to $T'$ which has
    $v= n^2 - 5n +6$ variables. If $n=8$ then $v= 30$ and all partial orders having $8$
    elements are generated in  one computer cycle in our computer installation.
    Simply adding to $T$ some new axioms, we can generate  models of the new theory in
    the same way and the same computing time.
    For example, in this way we can compute all lattices  of order $8$ just by adding to $T$ only one axiom.

    With small adjustments, this algorithm works on small computers in real time for $n\leq 12$.
    Namely, for larger $n$, the feasibility constant $K$,
    see the footnote (3), is exceeded. For larger $n$ we have to use the previously described procedure
    based on components.
    In order to describe them, let us define recursively the following sequence of length $n$ of the following formulas.
    \begin{equation}
       \theta_0(x) \equiv \forall y(x\leq y),\quad
       \theta_{k+1}(x)\equiv \forall y(\bigvee_{i\leq k}\theta_i(y) \vee x\leq y)\wedge \bigwedge_{i\leq k}\neg \theta_i(x).
    \end{equation}

    If $\A= (A,\leq)$ is a partial order with domain $I_n$, we see that the associated components are:
    $X_0= \{0\}$, $0$ is the least element of $\A$, $X_1$ is the set of minimal
    elements of $A\backslash \{0\}$, $X_2$ is the set of minimal elements of
    $A\backslash (X_0\cup X_1)$, and so on. Let us call an element of layer $X_k$,
    a $k$-minimal element. Since $X_{i+1}\not=\emptyset$ implies $X_i\not=\emptyset$,
    we see that $X_k=\emptyset$ for $k>m$ for some $m\leq n$. Hence,
    \begin{equation}\label{cpA}
       \X= (X_1,\ldots,X_m, 0,\ldots,0),\quad X_i\not= \emptyset,
    \end{equation}
    is the associated $c$-partition of $A$.

    \begin{prop}\label{posn}
         Let $\A$ be a partial order of size $n$ with the least element and the greatest element.
         Then the number of $c$-partitions {\rm (\ref{cpA})} of $A$ which consist from layers $X_k$ of
         $k$-minimal elements is   $c_n= 2^{n-3}$.
    \end{prop}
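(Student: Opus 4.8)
The plan is to reduce the count to a count of integer compositions. First I would observe that, because a $c$-partition is presented in the canonical form \ref{c2}, it is completely determined by the sequence of sizes $(\alpha_0,\alpha_1,\ldots,\alpha_m)$ of its layers, where $\alpha_i=|X_i|$ and $\sum_i\alpha_i=n$. Hence it suffices to determine exactly which size sequences arise as the layer decomposition $X_0,X_1,\ldots,X_m$ into $k$-minimal elements of some partial order of size $n$ with a least and a greatest element, and then to count those sequences. Throughout I work in the regime $n\geq 3$ in which the formula $2^{n-3}$ is meaningful.

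Next I would pin down the two forced constraints on an admissible size sequence. Since $\A$ has a least element $0$, that element is the unique minimal element of $\A$, so $X_0=\{0\}$ and $\alpha_0=1$. For the top, let $g$ be the greatest element; since $g>a$ for every $a\neq g$, the element $g$ can be minimal in a subset $S\ni g$ only when $S=\{g\}$, for any other member of $S$ lies strictly below $g$. Consequently $g$ lands in a layer $X_k$ precisely when $A\setminus(X_0\cup\cdots\cup X_{k-1})=\{g\}$, which forces that layer to be the last one and to equal the singleton $\{g\}$; thus $\alpha_m=1$. Combined with the already noted implication that $X_{i+1}\neq\emptyset$ forces $X_i\neq\emptyset$, this shows the admissible size sequences are exactly those with $\alpha_0=\alpha_m=1$ and $\alpha_1,\ldots,\alpha_{m-1}\geq 1$.

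I would then verify that every such sequence is genuinely realized. Given positive integers $\alpha_1,\ldots,\alpha_{m-1}$ summing to $n-2$, take the fully graded poset on layers $L_0,\ldots,L_m$ with $|L_i|=\alpha_i$ (and $\alpha_0=\alpha_m=1$), ordered by $a<b$ iff $a\in L_i$, $b\in L_j$ with $i<j$. This is a partial order whose sole member of $L_0$ is least and whose sole member of $L_m$ is greatest, and a direct check shows its $k$-minimal layers are exactly $L_0,\ldots,L_m$: the minimal elements are $L_0$, after deleting $L_0$ the minimal elements are $L_1$, and so on. Hence the assignment from admissible size sequences to realized layer decompositions is onto, and each $c$-partition of the prescribed form corresponds to one and only one admissible size sequence.

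Finally I would count. The admissible size sequences are in bijection with compositions $(\alpha_1,\ldots,\alpha_{m-1})$ of $n-2$ into positive parts, the end sizes $\alpha_0,\alpha_m$ being fixed at $1$. The number of compositions of the positive integer $n-2$ into positive parts is $2^{(n-2)-1}=2^{n-3}$, which yields $c_n=2^{n-3}$. The only genuine obstacle is the realizability step: one must check carefully that the graded construction reproduces precisely the prescribed layers, with no merging or splitting, since the remaining work is the routine translation into compositions and the standard count $2^{N-1}$ of compositions of $N$.
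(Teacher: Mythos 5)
Your proof is correct and follows essentially the same route as the paper's: both reduce the count to compositions of $n-2$ into positive parts (the paper sums $\binom{n-3}{m-1}$ over $m$, you invoke the closed form $2^{(n-2)-1}$ directly). The only difference is that you spell out what the paper dismisses as obvious --- that the bottom and top layers are the singletons $\{0\}$ and $\{g\}$, and that every admissible size sequence is actually realized (via the fully graded poset) --- which makes your argument somewhat more complete but not different in substance.
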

    \begin{proof}
      Obviously, the least element and the greatest element can be omitted from $A$.
      Hence, we count $c$-partitions of $A'= \{1,2\ldots,n-2\}$.
      Let $|X_i|= \alpha_i$.
      Therefore  $(\alpha_1,\ldots,\alpha_m)$ is an integer solution of
      \begin{equation}\label{eq4}
        n-2 = x_1 + \ldots + x_m,\quad x_1,\ldots,x_m\geq 1,
      \end{equation}
      where  $m\leq n-2$. Equation \ref{eq4} has $\binom{n-3}{m-1}$ solutions, hence the total
      number of $c$-partitions \ref{cpA} over domain $A$ is
      \begin{equation}
        c_n = \sum_{m=1}^{n-2} \binom{n-3}{m-1} = 2^{n-3}.
      \end{equation}
    \end{proof}
    If $i\in X_k,\, j\in X_l,\, l \leq k$ then $i\not\leq j$.
    Hence, in addition to (\ref{conspa}), for each $c$-partition  more variables $p_{ij}$ are killed:
    \begin{equation}
        p_{ij}= 0,\quad i\in X_k,\, j\in X_l,\, l \leq k.
    \end{equation}

    For so introduced parameters, we can use the counting procedure TBA
    (Section \ref{dp}) for finding and counting labeled and unlabeled
    partial orders of size $n$ with the least element and  the greatest element.
    According to Proposition \ref{posn}, the procedure consists from $2^{n-3}$
    loops. In each loop, a $c$-partition $\X= (X_1,\ldots,X_m, 0,\ldots,0)$, $X_i\not= \emptyset$,
    is produced and adequate family $\K_\X$ from which labeled and unlabeled models
    are generated and counted by (\ref{ltn}) and (\ref{kappan}).
    A program implementation of this procedure in our system can be found at the given above address.
    Simply adding axioms for particular types of ordering, e.g. lattices,
    distributive  lattices, etc. we construct and count labeled and
    unlabeled structures of this particular type as well.

  \subsection{Other examples}

   Semantics of our system lay in the first order predicate logic, hence
   in principle models of any class of finite structures described in
   this logic can be computed. The obvious limitation is
   the memory size and the hyper-exponential growth of the number
   of propositional variables appearing in the description of the
   related class of models. However, with a good choice of an adequate
   subclass of models and the ably reduction (killing)
   of variables we believe that new and interesting results in computational discrete mathematics
   can be obtained. Even if the aim of this paper is not to
   study the particular class of finite structures,
   we proposed a number of examples of this kind.
   These examples refer to ordered structures,
   automorphisms of structures and Latin squares (quasigroups).
   Examples of interest include computations of various types of
   lattices and a solution of Sudoku problem.
   In Sudoku problem appear 729 propositional variables,
   but our system solved it effortlessly by virtue of good elimination (killing) of
   variables.
   There are particular attempts for analysis and modeling classes of Latin squares
   in propositional calculus, eg [\ref{Ercsey}], [\ref{Lewis}] and
   [\ref{Zhang}]. In contrast to our approach, their computation relies on Davis-Putnam algorithm.
   Our aim is to refine some of the ideas we have just outlined,
   particularly based on definability as presented in subsections
   \ref{KillVar}, \ref{dp} and Example \ref{os}.

{}


\begin{thebibliography}{}

 \bibitem{1}\label{Burris} S. Burris and H.P. Sankappanavar,
 {\it A course in Universal algebra}, Springer, 1981, 2012 Update.

 \bibitem{2}\label{Dow} A. Dow, P. Nyikos, {\it Representing free Boolean algebras}, Fundamenta Mathematicae,
 \textbf{141}, (1992), 21--30.

 \bibitem{2a}\label{Ercsey} Maria Ercsey-Ravasz, Zoltan Toroczkai, {\it The Chaos Within Sudoku},
 Scientific Reports 2, Article number: 725 doi:10.1038/srep00725, 2012.

 \noindent
 http://www.nature.com/srep/2012/121011/srep00725/full/srep00725.html

 \bibitem{3}\label{Keisler} C.\,C. Chang, J.\,H. Keisler, {\it Model theory}, North Holland, (1990).

 \bibitem{4}\label{Lewis} Rhyd Lewis,  {\it Metaheuristics can Solve Sudoku Puzzles},
 Journal of Heuristics,  Vol. 13, Issue 4, pp 387-401, 2007.

 \bibitem{5}\label{Mostowski} K. Kuratowski, A. Mostowski, {\it Set Theory}, PWN, 1967.

 \bibitem{6}\label{MijajloFB} \v Z. Mijajlovi\'c, {\it On free Boolean vectors}, Publ. Inst. Math, \textbf{64(78)}, 1998, 2--8.

 \bibitem{7}\label{MijajloBorel} \v Z. Mijajlovi\'c, D. Doder, A. Ili\'c-Stepi\'c, {\it Borel sets and countable models},
    Publ. Inst. Math, \textbf{90(104)}, (2011), 1--11.

 \bibitem{8}\label{MijajloMT} \v Z. Mijajlovi\'c, {\it Model Theory}, Novi Sad, (1985).
 \bibitem{9}\label{Svenonius} L. Svenonius {\it A theorem on permutations in models}, vol. 25, 173-178, 1959.

 \bibitem{10}\label{Zhang}  Hantao. Zhang, Maria Paola Bonacina, Jieh Hsiang
 {PSATO: \it a Distributed Propositional Prover and its Application to Quasigroup Problems},
 Jour. of Symbolic Computation, Vol. 21, Issues 4–6, 1996, 543–560.

 \bibitem{11}\label{Sikorski} R. Sikorski, {\it Boolean Algebras}, Springer-Verlag, Berlin, (1969).



 \end{thebibliography}
\end{document}